\newcommand\remove[1]{}
\newtheorem{theorem}{Theorem}
\newtheorem{lemma}[theorem]{Lemma}
\newtheorem{proposition}[theorem]{Proposition}
\theoremstyle{remark}
\newtheorem{definition}{Definition}[section]
\theoremstyle{remark}
\newtheorem{remark}{Remark}[section]
\newcommand{\ba}{\begin{array}}
\newcommand{\ea}{\end{array}}
\newcommand{\be}{\begin{equation}}
\newcommand{\ee}{\end{equation}}
\newcommand{\bea}{\begin{eqnarray}}
\newcommand{\eea}{\end{eqnarray}}
\def\argmax{\text{argmax}}
\newcommand\nc\newcommand
\nc\bfa{{\boldsymbol a}}\nc\bfA{{\boldsymbol A}}\nc\cA{{\mathscr A}}
\nc\bfb{{\boldsymbol b}}\nc\bfB{{\boldsymbol B}}\nc\cB{{\mathscr B}}
\nc\bfc{{\boldsymbol c}}\nc\bfC{{\boldsymbol C}}\nc\cC{{\mathscr C}}
\nc\sC{{\mathscr C}}
\nc\bfd{{\boldsymbol d}}\nc\bfD{{\bfD}}
\nc\cD{{\mathscr D}}
\nc\bfe{{\boldsymbol e}}\nc\bfE{{\boldsymbol E}}\nc\cE{{\mathscr E}}
\nc\bff{{\boldsymbol f}}\nc\bfF{{\boldsymbol F}}\nc\cF{{\mathscr F}}
\nc\bfg{{\boldsymbol g}}\nc\bfG{{\boldsymbol G}}\nc\cG{{\mathscr G}}
\nc\bfh{{\boldsymbol h}}\nc\bfH{{\boldsymbol H}}\nc\cH{{\mathscr H}}
\nc\bfi{{\boldsymbol i}}\nc\bfI{{\boldsymbol I}}\nc\cI{{\mathscr I}}\nc\sI{{\mathscr I}}
\nc\bfj{{\boldsymbolj}}\nc\bfJ{{\boldsymbol J}}\nc\cJ{{\mathscr J}}
\nc\bfk{{\boldsymbolk}}\nc\bfK{{\boldsymbol K}}\nc\cK{{\mathscr K}}
\nc\bfl{{\boldsymboll}}\nc\bfL{{\boldsymbol L}}\nc\cL{{\mathscr L}}
\nc\bfm{{\boldsymbolm}}\nc\bfM{{\boldsymbol M}}\nc\cM{{\mathscr M}}
\nc\bfn{{\boldsymboln}}\nc\bfN{{\boldsymbol N}}\nc\cN{{\mathscr N}}
\nc\bfo{{\boldsymbolo}}\nc\bfO{{\boldsymbol O}}\nc\cO{{\mathscr O}}
\nc\bfp{{\boldsymbolp}}\nc\bfP{{\boldsymbol P}}\nc\cP{{\mathscr P}}
\nc\eP{{\EuScriptP}}\nc\fP{{\mathfrak P}}
\nc\bfq{{\boldsymbol q}}\nc\bfQ{{\boldsymbol Q}}\nc\cQ{{\mathscr Q}}
\nc\bfr{{\boldsymbol r}}\nc\bfR{{\boldsymbol R}}\nc\cR{{\mathscr R}}
\nc\bfs{{\boldsymbol s}}\nc\bfS{{\boldsymbol S}}\nc\cS{{\mathscr S}}
\nc\bft{{\boldsymbol t}}\nc\bfT{{\boldsymbol T}}\nc\cT{{\mathscr T}}
\nc\bfu{{\boldsymbol u}}\nc\bfU{{\boldsymbol U}}\nc\cU{{\mathscr U}}
\nc\bfv{{\boldsymbol v}}\nc\bfV{{\boldsymbol V}}\nc\cV{{\mathscr V}}
\nc\bfw{{\boldsymbol w}}\nc\bfW{{\boldsymbol W}}\nc\cW{{\mathscr W}}\nc\sW{{\mathscr W}}
\nc\bfx{{\boldsymbol x}}\nc\bfX{{\boldsymbol X}}\nc\cX{{\mathscr X}}
\nc\bfy{{\boldsymbol y}}\nc\bfY{{\boldsymbol Y}}\nc\cY{{\mathscr Y}}
\nc\bfz{{\boldsymbol z}}\nc\bfZ{{\boldsymbol Z}}\nc\cZ{{\mathscr Z}}
\newcommand{\bfit}{\bfseries\itshape}
\begin{document}
\title{Achieving Secrecy Capacity of the Wiretap Channel and Broadcast Channel with a Confidential Component}

\author{Talha~Cihad~Gulcu,~\IEEEmembership{Student Member,~IEEE,}
        Alexander~Barg,~\IEEEmembership{Fellow,~IEEE}
   
\thanks{T. C. Gulcu was with the Department
of ECE and Institute for Systems Research, University of Maryland, College Park, MD 20742, USA, e-mail: gulcu@umd.edu.
Research supported in part by NSF grant CCF1217245.}
\thanks{A. Barg is with the Department of ECE and Institute for Systems Research, University
of Maryland, College Park, MD 20742, and IITP, Russian Academy of
Sciences, Moscow, Russia. Email: abarg@umd.edu. Research supported
in part by NSF grants CCF1217245, CCF1217894, and CCF1422955.}
\thanks{Manuscript received October 2014; revised March 2015; accepted October 2016. The results of this paper were presented in part 
at the IEEE Information Theory Workshop, Jerusalem, Israel, April 27-May 1, 2015.}}

\markboth{IEEE Transactions on Information Theory, Vol. , No. ,  ~2016}%
{T. C. Gulcu and A.  Barg: Achieving Secrecy Capacity of the Wiretap Channel}

\maketitle

\begin{abstract} 
The wiretap channel model of Wyner is one of the first communication models with both reliability 
and security constraints. Capacity-achieving schemes for various models of the wiretap channel
have received considerable attention in recent literature. In this paper, we show that 
capacity of the general (not necessarily degraded or symmetric) wiretap channel under a ``strong secrecy constraint''
can be achieved using a transmission scheme based on polar codes. We also extend our construction to 
the case of broadcast channels with confidential messages defined by Csisz{\'a}r and K{\"orner}, achieving
the entire capacity region of this communication model.
\end{abstract}

\begin{IEEEkeywords}
Polar codes, chaining construction, strong secrecy, coordinate partition.
\end{IEEEkeywords}
 
 \IEEEpeerreviewmaketitle
 
\section{Introduction}

The wiretap channel model $\sW$ was introduced by Wyner in 1975 \cite{wyner}. 
In this model, there are two receivers $Y,Z$ and a single transmitter $X$. The transmitter aims at 
sending messages to Receiver 1 through a communication channel $W_1$. The information sent from $X$ to
$Y$ is also received by Receiver 2 through another channel $W_2$. The transmission problem in the system
$\sW(W_1,W_2)$ calls for designing a coding system that supports communication between $X$ and $Y$
in a way that is both reliable and secure. The reliability requirement is the usual one for 
communication systems, namely, that the error probability of decoding the information by $Y$ be made
arbitrarily low by increasing the block length of the encoding. At the same time, the transmission 
needs to be made secure in the sense that the information extracted by Receiver 2 about the message of
$X$ approaches zero as a function of the block length. 

To describe the problem in formal terms, denote the input alphabet of the transmitter by 
 ${\mathscr X}$, and the output alphabets of the channels $W_1$ and $W_2$ by
${\mathscr Y}$ and ${\mathscr Z}$, respectively.
The messages that the transmitter can convey to Receiver 1 form a finite set denoted below by
$\mathscr{M}.$ For transmission over the channel the message is encoded using a mapping
$f:\mathscr{M}\to \mathscr{X}^N,$ where $\mathscr{X}^N$ is an $N$-fold repetition of the input alphabet. 
We say that $f$ is a length-$N$ block encoder of the transmitter. Capacity-attaining schemes for the wiretap
channel rely on randomized encoding, i.e., a mapping that
sends $\cM$ to a probability distribution on $\cX^N.$ In other words, the message $m\in \cM$
is encoded  as a sequence $x^N\in \cX^N$ with probability $f(x^N|m),$ and the encoder is defined
as a matrix of conditional probabilities $(f(x^N|m))_{m\in\cM}^{x^N\in\cX^N}.$

The decoder of Receiver 1 is a mapping $\phi: {\mathscr Y}^N \to {\mathscr M}$. We also denote by $P_{Y|X}$
and $P_{Z|X}$ the conditional distributions induced by the channels $W_1$ and $W_2$, respectively, and
define the induced distributions $P_{Y^N|X^N},P_{Z^N|X^N},$ where, for instance, 
$P_{Y^N|X^N}(y^N|x^N)=\prod_{i=1}^N P_{Y|X} (y_i|x_i)$, 
where $y_i$ and $x_i$ refer to the $i$-th symbol of the vectors $y^N$ and $x^N$, respectively.

\begin{definition}\label{def:1.1}
We say that the encoder-decoder pair $(f,\phi)$ \emph{gives rise to $(N,\epsilon)$-transmission} over
the wiretap channel $\sW$ if for all $m\in {\mathscr M}$
\begin{equation}
\sum_{x^N\in {\mathscr X}^N} f(x^N|m) P_{Y^N|X^N} (\phi(y^N)=m|x^N) \geq 1-\epsilon\label{reliability} 
\end{equation}
and
\begin{equation}
I(M;Z^N) \leq \epsilon, \label{security}
\end{equation}
where $M$ is the message random variable (RV) and $Z^N$ is the RV that corresponds to the observations of Receiver 2. 
\label{wiretap_definition}
\end{definition}

In Definition \ref{wiretap_definition}, Eq.~\eqref{reliability} represents the reliability of communication condition while \eqref{security} answers the security of transmission requirement.
We note that in many works on transmission with a secrecy constraint the security condition was formulated
in a more relaxed way, namely as the inequality
         \begin{equation}\label{eq:ws}
           (1/N)I(M;Z^N)<\epsilon.
         \end{equation}
This is particularly true about pre-1990s works in information theory, but also applies to some very recent
works on the wiretap channel, e.g., \cite{thangaraj,shamai,Koyluoglu2010}.  
However, as shown by Maurer in \cite{maurer1,maurer2}, this constraint does not fulfill the intuitive
security requirements in the system. More specifically,  it is possible to construct examples in which
inequality \eqref{eq:ws} is satisfied and at the same time Receiver 2 is capable of learning
$N^{1-\epsilon}$ out of $N$ bits of the encoding $x^N.$ In view of this,
Maurer suggested \eqref{security} as a better alternative to condition \eqref{eq:ws}. 
As a result, currently \eqref{eq:ws} is called the ``weak security constraint'' as opposed to the 
stronger constraint \eqref{security}. 
In this paper we design coding schemes that provide strong secrecy, so below we work only with condition \eqref{security}.

\vspace*{.1in}
The secrecy capacity of the wiretap channel is defined as follows.
\begin{definition} \label{def:1.2} The value $R>0$ is called an achievable rate for the wiretap channel $\sW$ if there exists a sequence
of message sets ${\mathscr M}_{N}$ and encoder-decoder pairs
$(f_N, \phi_N)$ giving rise to $(N,\epsilon_N)$ transmission with $\epsilon_N\to 0$ and
$\frac{1}{N}\log |{\mathscr M}_{N}|\to R$ as $N\to\infty$.
The secrecy capacity $C_s$ is the supremum of achievable rates for the wiretap channel.
\end{definition}

The following theorem provides an expression for $C_s$.
\begin{theorem}\label{thm:dg}
{\rm ({\cite{csiszar2}; see also \cite[p.411]{csiszar}})} The secrecy capacity of the wiretap channel $\sW$ equals
\begin{equation}
C_s=\max [I(V;Y)-I(V;Z)], \label{eq:cs}
\end{equation}
where the maximum is computed over all RVs $V,X,Y,Z$ such that the Markov condition
$V\to X\to Y,Z$ holds true, and such that $P_{Y|X}=W_1$, $P_{Z|X}=W_2$. 
\end{theorem}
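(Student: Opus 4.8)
The plan is to establish the two directions of \eqref{eq:cs} separately: the direct part (achievability), showing that every $R<\max[I(V;Y)-I(V;Z)]$ is an achievable rate in the sense of Definition~\ref{def:1.2}, and the converse, showing that no rate exceeding this quantity can be achieved. Since the present paper is constructive, I would actually use the polar-coding scheme developed in the body of the paper to furnish the direct part: fix a maximizing triple $(V,X)$ (with the induced $Y,Z$), and construct a randomized encoder $f_N$ and decoder $\phi_N$ over blocklength $N$ that simultaneously drive the decoding error at Receiver~1 below $\epsilon_N\to0$ (reliability, \eqref{reliability}) and the leakage $I(M;Z^N)$ below $\epsilon_N\to0$ (strong secrecy, \eqref{security}), while $\frac1N\log|\cM_N|\to I(V;Y)-I(V;Z)$. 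The auxiliary variable $V$ is incorporated by prefixing the channel with the memoryless map $P_{X|V}$, so that one effectively codes over the pair of channels $P_{Y|V}$ and $P_{Z|V}$; this reduces the general case to a wiretap channel whose legitimate and eavesdropper inputs are both $V$, at which point the chaining/coordinate-partition construction applies. For a self-contained, non-constructive proof of the direct part one could instead invoke a standard random-coding-with-binning argument (wiretap codebook: $2^{NI(V;Y)}$ codewords partitioned into $2^{N(I(V;Y)-I(V;Z))}$ bins, random codeword chosen within the bin), using joint typicality decoding for reliability and bounding $I(M;Z^N)$ via the fact that each bin is itself a good channel code for $P_{Z|V}$.

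For the converse I would take any sequence of $(N,\epsilon_N)$-schemes with rate $R_N\to R$ and $\epsilon_N\to 0$, and show $R\le\max[I(V;Y)-I(V;Z)]$. The standard steps are: start from $\log|\cM_N| = H(M)$; apply Fano's inequality to \eqref{reliability} to get $H(M|Y^N)\le N\delta_N$ with $\delta_N\to0$; combine with \eqref{security} to write $H(M)\le I(M;Y^N)-I(M;Z^N)+N\delta_N'$; then telescope $I(M;Y^N)-I(M;Z^N)=\sum_{i=1}^N\bigl[I(M;Y_i\mid Y^{i-1})-I(M;Z_i\mid Z_{i+1}^N)\bigr]$ and apply the Csisz\'ar sum identity to convert this into $\sum_i\bigl[I(M,Z_{i+1}^N;Y_i\mid Y^{i-1})-I(M,Y^{i-1};Z_i\mid Z_{i+1}^N)\bigr]$. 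Defining the single-letter auxiliary $V_i=(M,Y^{i-1},Z_{i+1}^N)$ and $X_i$ the $i$-th transmitted symbol, each summand is $\le I(V_i;Y_i)-I(V_i;Z_i)$ with $V_i\to X_i\to (Y_i,Z_i)$ a Markov chain; a time-sharing/single-letterization step (introducing a uniform time index $Q$ and setting $V=(V_Q,Q)$, $X=X_Q$) then yields $R\le I(V;Y)-I(V;Z)$ for an admissible choice of $V$, hence $R\le\max[\cdot]$. I would also note in passing that the maximizing $V$ may be taken with $|\cV|\le|\cX|$ by the usual Carath\'eodory/Fenchel--Eggleston cardinality-reduction argument, which makes the maximum well-defined (attained).

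The main obstacle is the strong-secrecy part of the direct statement: unlike weak secrecy \eqref{eq:ws}, which follows almost immediately from the random-binning argument, driving the \emph{unnormalized} leakage $I(M;Z^N)$ to zero requires more care. In the polar-coding route this is exactly where the coordinate-partition analysis of the eavesdropper's polarized indices and the chaining construction (to handle the non-degraded case, where the ``good for Bob'' and ``bad for Eve'' index sets need not be nested across the two channels) do the real work; verifying that the leakage on the chained blocks still vanishes, rather than merely its per-letter average, is the delicate estimate. In the information-theoretic route one would instead invoke resolvability/soft-covering bounds (e.g., channel resolvability of $P_{Z|V}$ by a random codebook of rate slightly above $I(V;Z)$) to control the total variation distance between $P_{Z^N|M=m}$ and the fixed output distribution $P_{Z^N}$, and then pass from total variation to mutual information; keeping track of the exponentially small TV bound and the resulting $I(M;Z^N)\le N\,d_{TV}\log|\cZ|+h(d_{TV})\to0$ is the step that needs attention. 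Everything else — Fano, the Csisz\'ar sum identity, single-letterization, cardinality bounding — is routine.
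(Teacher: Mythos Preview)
The paper does not give a proof of Theorem~\ref{thm:dg}; it is quoted with attribution to \cite{csiszar2} and \cite{csiszar} and serves only as background for the constructive results. What the paper actually proves is the separate Theorem~\ref{prop:wiretap}, which is the achievability direction only, via the explicit polar-coding scheme of Section~\ref{sect:wt}; no converse argument appears anywhere in the paper.

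Your outline is nonetheless a correct sketch of how the full theorem is established in the original source. On the achievability side, your plan---prefixing the channel with $P_{X|V}$ and then running a wiretap code over the effective channels $P_{Y|V},P_{Z|V}$, with chaining to handle the non-nested good/bad index sets in the non-degraded case---is exactly what the paper carries out in Section~\ref{sect:wt}, culminating in the leakage estimate $I(M^m;Z^m)=O(m^2N^3\,2^{-N^\beta})$ obtained through Lemmas~\ref{security_lemma_wiretap} and~\ref{security_lemma_wiretap2}; so on this half your proposal and the paper coincide. On the converse side you are supplying material the paper simply imports from \cite{csiszar2}: the Fano/Csisz\'ar-sum-identity argument with auxiliary $V_i=(M,Y^{i-1},Z_{i+1}^N)$ followed by time-sharing single-letterization is the standard route and is correct as you have written it.
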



While most general constructive coding schemes for the wiretap channels rely on polar codes, 
there were some constructive solutions even before the publication of  Ar{\i}kan's seminal work  \cite{arikan2009}.
At the same time, these schemes applied only to some special cases of the channels $W_1,W_2.$ For instance
the case when $W_1$ is noiseless and $W_2$ is a binary erasure channel was addressed in
\cite{suresh,thangaraj} which show that in this case the capacity $C_s$ can be achieved using low-density
parity-check codes. The results in \cite{thangaraj} are
based on the weak security assumption while strong security is considered in \cite{suresh}.
Moreover, \cite{thangaraj} extends the construction to the cases when 
both $W_1$ and $W_2$ are erasure channels, and when $W_1$ is noiseless and $W_2$ is a
binary symmetric channel. 

Another special case of the wiretap channel relates to the combinatorial version of the erasure channel
(the so-called wiretap channel of type II) in which 
Receiver 2 can choose to observe any $t$ symbols out of
$N$ transmitted symbols. Constructive capacity-achieving solutions for this case are based on 
MDS codes \cite{wei} or extractors \cite{cheraghchi}.

In \cite{bellare}, it is shown that $C_s$ is achievable with strong security using invertible extractors,
if both $W_1$ and $W_2$ are binary symmetric channels. Both encoding and
decoding algorithms in \cite{bellare} have polynomial complexity. Moreover, \cite{bellare} also claims that
its proof method can be easily extended to other wiretap channels
as long as both $W_1$ and $W_2$ are symmetric. 

After the introduction of polar codes by Ar{\i}kan, achieving $C_s$ via polar coding has been 
considered by different works, mostly under the degradedness assumption. Recall that 
a channel $W_2:\cX\to\cZ$ is called {\em degraded} with respect to a channel $W_1:\cX\to\cY$ if there 
exists a stochastic $|\cY|\times|\cZ|$ matrix $P_{Z|Y}(z|y)$ such that for all $x\in \cX,z\in \cZ$
    \begin{equation}\label{eq:degraded}
P_{Z|X}(z|x)=\sum_{y\in{\mathscr Y}} P_{Y|X}(y|x) P_{Z|Y}(z|y).
    \end{equation}  
The wiretap channel $\cW$ is called degraded if the channel to the eavesdropper is degraded with respect to the
main channel. In this case Theorem \ref{thm:dg} affords a simpler formulation because there is no need in the auxiliary RV $V$.
Namely, in the degraded case the secrecy capacity equals \cite[Probl.~17.8]{csiszar}
   \begin{equation}\label{eq:dgr}
     C_s=C(W_1)-C(W_2)
   \end{equation}
(this specialization is true under more general assumptions, but we will not need them below).

Communication over degraded wiretap channels using polar codes was considered in a number of papers, notably, 
\cite{Mahdavifar2011,shamai,and10,Koyluoglu2010}. The main result of these works is that 
secrecy capacity \eqref{eq:dgr} can be attained under 
the weak security constraint. We note that the degraded case is easier to handle with polar codes because
of the specific nature of the polar codes construction (more on this below in Sec.~\ref{sect:closer}).
Another step was made by \cite{sasoglu} which suggested a polar coding scheme that attains the rate $C_s$ 
of a symmetric degraded wiretap channel $\sW$ under the strong security requirement \eqref{security}.
More details about the results of \cite{sasoglu} are given in Sec.~\ref{sect:closer} below. 

\vspace{.05in} The problem of attaining secrecy capacity of the general wiretap channel 
\eqref{eq:cs} under the strong secrecy condition and without the degradedness assumption
 was further studied in \cite{sutter}.
 \remove{
A polar coding scheme suggested in this work attains a transmission rate of
 $\max_{p_{X}(x)}[H(X|Z)-H(X|Y)].$ 
Clearly $H(X|Z)-H(X|Y)\le C_s$ for all $X$ since one can take
$V=X$ in the Markov chain $V\to X \to Y,Z,$ which appears in Theorem \ref{thm:dg}.
It is not immediately clear for which channels the result of \cite{sutter} actually attains the
secrecy capacity of $\sW.$}The coding scheme employed in \cite{sutter} relies on two nested layers of the polarizing
transform. The decoder for the second (outer) layer works with the probability distribution 
generated by the first decoder, which is not easily computable. Thus, the low complexity decoding claim
of the construction made in \cite{sutter} is not supported by the known decoding procedures for polar codes.
For these reasons the construction in \cite{sutter} does not resolve the question of constructing
a capacity-achieving polar coding scheme for the nondegraded case of wiretap channels.

In related works \cite{wilde,renes} the problem of constructing capacity achieving schemes for wiretap channels
was addressed for the case of quantum channels. The constructions suggested in these works attain  symmetric
secrecy capacity of quantum wiretap channels. 
These constructions require a shared secret key between the transmitter and Receiver 1. This
requirement seems to be intrinsic to polar code constructions for this problem including our work. However the
constructions in \cite{wilde,renes} require a positive-rate shared key, 
{whereas the bit partition introduced by our encoding setup makes it possible to have a
vanishing rate shared key, as explained in Sec. \ref{sect:wt}.}

To summarize, to the best of our knowledge the question of constructing capacity-achieving polar coding schemes for
the non-degraded wiretap channel with strong secrecy is an open problem.\footnote{
A concurrent study \cite{chou14}, posted
after the completion of this work, also contains a solution of the problems considered in this paper, including the
general wiretap channel. The transmission scheme and the proof methods in \cite{chou14} are different from our work. Another recent paper, \cite{wei14}, also devoted to the general wiretap channel, focuses on the weak security
requirement. A more detailed discussion of the relation of our work and \cite{chou14} appears in Sec. \ref{conclusion}.} 

It is this problem that we aim to solve in this paper by removing the degradedness assumption
\eqref{eq:degraded}. We also do not assume that either of the channels $W_1,W_2$ is symmetric.
The main idea of our work is to exploit the Markov chain conditions intrinsic to
secure communication problems using polar codes. In Sec. \ref{sect:wt} we propose a polar coding scheme that attains the secrecy capacity \eqref{eq:cs} under the strong security assumption. Both the encoding and decoding complexity estimates of our construction are $O(N\log N)$, where $N$ is the length of the encoding. 
In Sec. \ref{csiszar-korner} we generalize our construction to cover the case when a part of transmitter's message is public, i.e., is designed to be conveyed both to Receivers 1 and 2. This model, called a broadcast channel with confidential messages, is in fact the principal model in the founding work of Csisz{\'a}r and K{\"o}rner \cite{csiszar2} on this topic.

Apart from the basic polar coding results \cite{arikan2009}, our solution of the described problems relies on the previous work on the wiretap channel \cite{sasoglu}, the polar coding scheme for the broadcast channel
of \cite{Mondelli14}, and the construction of polar codes for general memoryless channels in \cite{Honda13}. 
A new idea introduced in our solution is related to a stochastic encoding scheme that emulates the
random coding proof of the capacity theorem in \cite{csiszar2}, whereby  
polarization is used for the values of the auxiliary random variable $V$ in Theorem \ref{thm:dg}, followed by a stochastic encoding into a channel codeword. Another insight, which is particularly useful for the
broadcast channel result in Sec.~\ref{sect:bcc}, is related to a partition of the 
coordinates of the transmitted block that enables simultaneous decoding of different parts of the transmitted
message by both receivers, whereby the decoder of polar codes is used by the receivers according to their high- and low-entropy bits. It becomes possible to show that the receivers recover the bits designed to communicate with 
each of them with high probability, and that the secret part of the message is not accessible to the 
unintended recipient.

\section{Preliminaries on Polar Coding}\label{sec:prelim}

We begin with recalling basic notation for polar codes and then continue with 
the scheme for capacity-achieving communication on discrete binary-input channels.

Let $W$ be a binary-input channel with the output alphabet ${\mathscr Y},$ input alphabet
${\mathscr X}=\{0,1\},$ and the conditional probability distribution $W_{Y|X}(\cdot|\cdot)$, 
having capacity $C(W):=\max_{P_X}I(X;Y)$. The {\em symmetric capacity} $I(W)$ is the value of mutual information
$I(X;Y)$ when $P_X(0)=P_X(1)=1/2.$

\remove{
Throughout the paper we denote the capacity and  the symmetric capacity of $W$ by 
$C(W)$ and $I(W)$, respectively.} We say that the channel
$W$ is symmetric if 
$W_{Y|X} (y|1), y\in {\mathscr Y}$ can be obtained from $W_{Y|X} (y|0), y\in {\mathscr Y}$  
through a permutation $\pi:\cY\to\cY$ such that $\pi^2=\text{id}.$ 
If $W$ is symmetric, then $I(W)=C(W).$

Given a binary RV $X$ and a discrete RV $Y$ supported on ${\mathscr Y}$,
define the Bhattacharyya parameter $Z(X|Y)$ as 
\begin{equation*}
Z(X|Y)=2\sum_{y\in{\mathscr Y}} P_Y(y) \sqrt{{P_{X|Y}(0|y) P_{X|Y}(1|y)}}.
\end{equation*}
The value $Z(X|Y), 0\le Z(X|Y)\le 1$ measures the amount of randomness in $X$ given $Y$ in the sense
that if it is close to zero, then $X$ is almost constant given $Y$, while if it is close to one, then $X$ is almost uniform on $\{0,1\}$ given $Y$.
The Bhattacharyya parameter $Z(W)$ of a binary-input channel $W$ is defined as
\begin{equation*}
Z(W)=\sum_{y\in {\mathscr Y}} \sqrt{W_{Y|X}(y|0) W_{Y|X}(y|1)}.
\end{equation*}
Clearly if $P_X(0)=P_X(1)=1/2,$ then $Z(X|Y)$ coincides with the value $Z(W)$ for the communication channel
$W: X\to Y.$ 

For $N=2^n$ and $n\in{\mathbb N}$, define the polarizing matrix (or the Ar{\i}kan transform matrix)
as $G_N=B_N F^{\otimes n}$, where $F=\text{\small{$\Big(\hspace*{-.05in}\begin{array}{c@{\hspace*{0.05in}}c}
    1&0\\[-.05in]1&1\end{array}\hspace*{-.05in}\Big)$}}$,
$\otimes$ is the Kronecker product of matrices, and $B_N$ is a 	``bit reversal'' permutation 
matrix \cite{arikan2009}. In \cite{arikan2009}, Ar{\i}kan showed that given a 
symmetric and binary input channel $W$, an appropriate subset of the rows of $G_N$ can be used as a generator matrix of a linear code that achieves the capacity of $W$ as $N\to\infty$.

\subsection{Symmetric Channel Coding}
\label{symmetric}
The material in this section and Sec. \ref{sect:asymmetric} is well understood, but it merits some space in the present paper because it helps us to define the terminology that is useful for the main results below.
Given a symmetric binary-input channel $W$, define the channel $W^N$ with input alphabet
$\{0,1\}^N$ and output alphabet ${\mathscr Y}^N$ by the conditional distribution
\begin{equation*}
W^N(y^N|x^N)= \prod_{i=1}^N W(y_i|x_i),
\end{equation*}
where $W(.|.)$ is the conditional distribution that defines $W$.
Define a combined channel $\widetilde{W}$ by the conditional distribution
\begin{equation*}
\widetilde{W}(y^N|u^N)= W^N(y^N|u^N G_N).
\end{equation*}
In terms of $\widetilde{W}$, the channel seen by the $i$-th bit $U_i, i=1,\dots,N$
(also known as the bit-channel of the $i$-th bit) can be written as
\begin{equation*}
W_i (y^N, u^{i-1}|u_i)= \frac{1}{2^{n-1}} \sum_{\widetilde{u}\in \{0,1\}^{n-i}} \widetilde{W} (y^N| (u_1^{i-1},u_i, \widetilde{u})),
\end{equation*}
where $u^{i-1}=(u_1,u_2,\dots,u_{i-1})$. 
We see that $W_i$ is the conditional distribution of $(Y^N, U^{i-1})$ given $U_i$
provided that the channel inputs $X_i$ are uniformly distributed for all $i=1,\dots,N$.
\remove{
\begin{figure}[h]
\centering
\includegraphics[scale=0.9]{arikan.pdf}
\caption{Block diagram of the symmetric channel coding}\label{figure1a}
\end{figure}
}
The bit-channels thus defined are partitioned into good channels ${\mathscr G}_N (W,\beta)$
and bad channels ${\mathscr B}_N (W,\beta)$ based on the value of their Bhattacharyya parameters.
Bearing in mind our notation choices later in the paper, we denote them by
\begin{equation}\label{good-bad}
\begin{aligned}
 \cL_{X|Y}=\cL_{X|Y}(N)&= \{ i\in [N]: Z(W_i) \leq \delta_N \}\\
  \cH_{X|Y}=\cH_{X|Y}(N)&= \{ i\in [N]: Z(W_i)  > 1-\delta_N \} ,
\end{aligned}
\end{equation}
respectively, where $[N]=\{1,2,\dots,N\}$ and $\delta_N \triangleq 2^{-N^{\beta}}, \beta\in (0,1/2).$
As shown in \cite{ari09a,korada}, for any symmetric binary-input channel $W$ and
any constant $\beta<1/2,$
      \begin{equation}\label{telatar}
   \begin{aligned}
     \lim_{N\to\infty} \frac{|\cL_{X|Y}(N)|}{N}&= C(W)\\
     \lim_{N\to\infty} \frac{|\cH_{X|Y}(N)|}{N}&= 1-C(W).
     \end{aligned}
   \end{equation}
Based on this equality, information can be transmitted over the good-bit channels while the remaining
bits are fixed to some values known in advance to the receiver (in polar coding literature they are called \emph{frozen bits}).  
The transmission scheme can be described as follows: 
A message of $k=|\cL_{X|Y}|$ bits is written in the bits $u_i, i\in \cL_{X|Y}.$ 
The remaining $N-k$ bits of $u^N$ are set to 0. This determines the sequence $u^N$ which
is transformed into $x^N=u^N G_N,$ and the vector $x^N$ is sent over the channel. Denote by $y^N$
the sequence received on the output.
The decoder finds an estimate of $u^N$ by computing the values $\hat u^i, i=1,\dots,N$ as follows:
\begin{align}\label{eq:ML}
\hat{u}_i=
\begin{cases}
\argmax_{u\in \{0,1\}}W_i (y^N,\hat{u}^{i-1}|u), &\text{if} \, i\in \cL_{X|Y},  \\
0, &\text{if} \, i\in \cH_{X|Y}.
\end{cases}
\end{align}
The results of \cite{arikan2009,ari09a} imply the following upper bound on the error probability $P_e=\Pr(\hat u^N\ne u^N):$
\begin{equation}\label{eq:ep}
P_e \leq \sum_{i\in \cL_{X|Y}} Z(W_i)  \leq N 2^{-N^{\beta}} \leq 2^{-N^{\beta'}},
\end{equation}
where $\beta$ is any number in the interval $(0,0.5)$ and $\beta'=\beta'(N)<\beta.$ 

This describes the basic construction of polar codes \cite{arikan2009} which  
attains symmetric capacity $I(W)$ of the channel $W$ with asymptotically vanishing error rate.

\begin{remark}\label{rem:oN}
There is a subtle point about the limit relations in \eqref{telatar}. Even though asymptotically the bit channels are
either good or bad, it is not true that $\cL_{X|Y}^c=\cH_{X|Y}$ because there is a nonempty
subset of indices $\cL_{X|Y}^c\backslash\cH_{X|Y}$ of cardinality $o(N)$ that is
neither good nor bad. This distinction has no import for the simple situation of transmitting over $W$, but leads to
complications in the multi-user systems considered below; see, e.g., \eqref{eq:p} in Sec.~\ref{sect:wt}. \remove{We use the terms ``not good,'' ``not bad'' to describe the corresponding subsets of bit channels.}\end{remark}

\subsection{General Channel Coding}
\label{sect:asymmetric}
Let $W$ be a binary-input discrete memoryless channel $W:\cX\to\cY$ and let
$P_X$ be the capacity achieving distribution of $W$. 
If $P_X$ is not uniform, then the basic scheme attains a transmission rate $I(W)$ which is less than $C(W).$
This scheme was extended by Honda and Yamamoto \cite{Honda13} to cover the case of arbitrary distributions $P_X$, 
attaining the rate $C(W)$ for general binary-input channels.

To explain the idea in \cite{Honda13}, for a given block length $N$ define the sets
\begin{equation}\begin{aligned}
{\mathscr H}_X&=\{i\in [N]: Z(U_i|U^{i-1})\geq 1-\delta_N\} \\
{\mathscr L}_X&=\{i\in [N]: Z(U_i|U^{i-1})\leq \delta_N\} \\
{\mathscr H}_{X|Y}&=\{i\in [N]: Z(U_i|U^{i-1},Y^{N})\geq 1-\delta_N\} \\
{\mathscr L}_{X|Y}&=\{i\in [N]: Z(U_i|U^{i-1},Y^{N})\leq \delta_N\}
\end{aligned} \label{eq:L}
\end{equation}
where $U^N,X^N,Y^N$ have the same meaning as above. 
It can be shown \cite[Theorem 1]{Honda13} that 
\begin{align*}
\lim_{N\to\infty} \frac{1}{N} |{\mathscr H}_X|&= H(X) \\
\lim_{N\to\infty} \frac{1}{N} |{\mathscr H}_{X|Y}|&= H(X|Y).
\end{align*}
Two special cases of these relations were proved earlier, see
\cite[Theorem 19]{korada} for uniform $U^N$ and \cite{ari09a} for a fixed (small) value of $\delta$.

We note that the set ${\mathscr L}_{X|Y}$ is the set of good bit channels defined 
in \eqref{good-bad}. Unlike the case of uniform $P_X$, it is not possible to use all of these channels to transmit 
information over $W.$ 
This is because if $i\notin {\mathscr H}_X$, then $U_i$ cannot be used to carry information conditioned
on previous bits $U^{i-1}$. Hence \cite{Honda13} argued that 
the set of information indices should be chosen as ${\mathscr I}\triangleq{\mathscr H}_X \cap {\mathscr L}_{X|Y}$
rather than ${\mathscr L}_{X|Y}$.

Since ${\mathscr H}_{X|Y}\subseteq {\mathscr H}_X$ and the number
of indices that are neither in ${\mathscr H}_{X|Y}$ nor
in ${\mathscr L}_{X|Y}$ is $o(N)$, we have
  $$ 
  \lim_{N\to\infty} \frac{1}{N} |{\mathscr I}|= \lim_{N\to\infty} \frac{1}{N} ({|{\mathscr H}_X|- |{\mathscr H}_{X|Y}|})= C(W),
  $$
i.e., transmitting the information using the bits $U_i,i\in \cI$ attains the capacity of the channel $W.$

The code construction in \cite{Honda13} makes use of the following partition of the coordinate set $[N]$:
   \begin{equation}\label{part}
\left.
\begin{array}{l}
{\mathscr F}_r={\mathscr H}_X \cap {\mathscr L}^c_{X|Y}\\
{\mathscr F}_d={\mathscr H}_X^c\\[.05in]
{\mathscr I}={\mathscr H}_X \cap {\mathscr L}_{X|Y}
\end{array}\right\}
    \end{equation}
where the superscript $^c$ refers to the complement of the subset in $[N].$ In terms
of this partition, the encoding is done as follows. The information bits are stored in
$\{u_i, i\in {\mathscr I}\}.$ As for the bits in the subset $\{i\in \cF_r \cup {\mathscr F}_d \},$ \cite{Honda13}
suggested to sample their values from the distribtion $P_{U_i|U^{i-1}}.$ These values are shared with the 
receiver similarly to the ``frozen bits'' of the symmetric scheme of \cite{arikan2009}.
Once $u^N$ is determined, the transmitter finds $x^N=u^N G_N$ and sends it over
the channel. 

The receiver uses the following successive decoding function: for $i=1,2,\dots,N$ let
  \begin{equation}\label{eq:scd1}
  \begin{aligned}
\hat{u}_i=\begin{cases}
\argmax_{u\in\{0,1\}} {P}_{U_i|U^{i-1},Y^{N}}(u|\hat{u}^{i-1},y^{N}), 
 i\in {\mathscr I}\\
 u_i, \hspace{1.8in} i\in  {\mathscr F}_r \cup {\mathscr F}_d .\\
\end{cases}
   \end{aligned}
   \end{equation}
Note that this rule represents a MAP decoder for the $i$th subchannel, which for the symmetric case coincides with the ML decoder
rule \eqref{eq:ML}.

The probability of decoding error can be bounded above similarly to \eqref{eq:ep}: 
    \begin{equation}\label{eq:perr}      
P_e\leq \sum_{i\in {\mathscr I}} Z(U_i|U^{i-1},Y^{N})\leq N 2^{-N^{\beta}} \leq 2^{-N^{\beta'}}
    \end{equation}
where the parameters have the same meaning as before.

Moreover, \cite{Honda13} argues that there exists a set of deterministic maps
  $\lambda_i: \{0,1\}^{i-1}\to \{0,1\}, i\in  {\mathscr F}_r \cup {\mathscr F}_d$ such that \eqref{eq:perr} holds true, stating
  the decoding rule in the form
 \begin{equation}\label{eq:scd3}
  \begin{aligned}
\hat{u}_i=\begin{cases}
\argmax_{u\in\{0,1\}} {P}_{U_i|U^{i-1},Y^{N}}(u|\hat{u}^{i-1},y^{N}), 
 i\in {\mathscr I}\\
 \lambda_i (\hat{u}^{i-1}), \hspace{1.4in} i\in  {\mathscr F}_r \cup {\mathscr F}_d.\\
\end{cases}
   \end{aligned}
   \end{equation}
The mappings $\lambda_i$ are shared between the transmitter and the receiver prior to communication.
We shall use this form of the transmission scheme and the decoder throughout our paper.

This completes the description of the capacity-achieving transmission scheme of \cite{Honda13}. We will rely in part
on these ideas in our construction of a coding scheme for the wiretap channel.

\section{A Closer Look at Prior Works on Polar Coding for the Wiretap Channel}\label{sect:closer}
To explain our proposal we will first discuss some of the schemes available in the literature.
We begin with the transmission scheme of \cite{Mahdavifar2011} (see also \cite{and10,Koyluoglu2010,shamai}). 
As already remarked, these works are concerned with the special 
case when the channel $W_2$ is degraded with respect to $W_1$ and aim to attain the rate value \eqref{eq:dgr} with weak secrecy.
Let $X^N$ be a random uniform vector over $\{0,1\}^N$.
Similarly to ${\mathscr H}_{X|Y}$ and ${\mathscr L}_{X|Y}$ given by \eqref{eq:L}, define the following subsets of indices:
    \begin{align*}
{\mathscr H}_{X|Z}&=\{i\in [N]: Z(U_i|U^{i-1},Z^{N})\geq 1-\delta_N\}\\
{\mathscr L}_{X|Z}&=\{i\in [N]: Z(U_i|U^{i-1},Z^{N})\leq \delta_N\} 
   \end{align*}
where $U^N=X^N G_N$, and $Z^N$ is the output that Receiver 2 observes when the transmitter
sends $X^N$. 
Partition the set $[N]$ as follows:
\begin{equation}
\begin{aligned}
{\mathscr R}&= {\mathscr L}_{X|Z} \\
{\mathscr I}&= {\mathscr L}_{X|Y}\backslash {\mathscr L}_{X|Z} \\
{\mathscr B}&= {\mathscr L}_{X|Y}^c.
\end{aligned}\label{eq:mp}
\end{equation}
Note that the degradedness assumption \eqref{eq:degraded} implies the inclusion ${\mathscr L}_{X|Z}\subseteq {\mathscr L}_{X|Y}$.
The coding scheme for the wiretap channel relies on the partition \eqref{eq:mp} and is summarized in Figure \ref{figure1}. \remove{
If $W_2$ is degraded with respect to $W_1$, then it follows that 
${\mathscr L}_{X|Z}\subseteq {\mathscr L}_{X|Y}$, ${{\mathscr R}}_1={\mathscr L}_{X|Z}$,
and ${{\mathscr R}}_2=\emptyset$, ${{\mathscr B}}={\mathscr L}_{X|Y}^C$. 
So, for the degraded case the sets ${{\mathscr I}}, {{\mathscr R}}_1, {{\mathscr B}}$
partition $[n]$, as shown by Figure \ref{figure1}.} 
The information is stored in the bits $u_i, i\in {\mathscr I}$.
The bits in the coordinates in $\cR$ are chosen randomly while the bits in $\cB$ form a subset of the frozen 
bits. 

Attainability of the rate \eqref{eq:dgr} using this coding scheme is proved in the cited papers.
An essential remark here is that
the bits $u_i, i\in {\mathscr R}$ are randomly selected because fixing their values contradicts even the 
weak security constraint, let alone the stronger one.

\begin{figure}[t]
\centering
\includegraphics[scale=0.7]{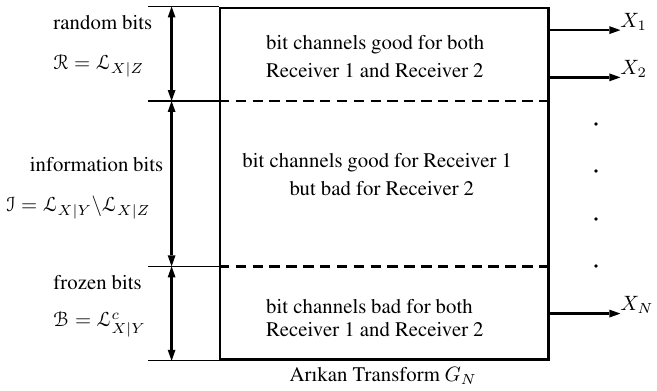}
\caption{Block diagram of the coding scheme in \cite{Mahdavifar2011}. Good bit channels and bad bit channels 
are as defined by \eqref{good-bad}.}\label{figure1}
\end{figure}

We note that generally ${\mathscr H}^c_{X|Z} \not\subset {\mathscr L}_{X|Y},$ and even though the number of 
coordinates in ${\mathscr L}_{X|Y}^c\cap{\mathscr H}^c_{X|Z}$ behaves as $o(N),$ this constitutes
an obstacle to achieving strong security.
To bypass it, \cite{sasoglu} uses a different partition of the coordinates, namely
\begin{equation}
 \begin{aligned}
 {\widetilde{\mathscr I}}&= {\mathscr L}_{X|Y}\cap {\mathscr H}_{X|Z} \\
{\widetilde{\mathscr B}}&= {\mathscr L}^c_{X|Y} \cap {\mathscr H}_{X|Z}\\
{\widetilde{\mathscr R}}_1&= {\mathscr L}_{X|Y}\cap {\mathscr H}^c_{X|Z}\\
{\widetilde{\mathscr R}}_2&= {\mathscr L}_{X|Y}^c\cap{\mathscr H}^c_{X|Z}.
\end{aligned}\label{sasoglu_partition}
\end{equation}
\remove{\begin{equation}
\begin{aligned}
{\widetilde{\mathscr R}}_1&= {\mathscr L}_{X|Y}\cap {\mathscr H}^c_{X|Z}\\
{\widetilde{\mathscr R}}_2&= {\mathscr H}^c_{X|Z} \backslash {\mathscr L}_{X|Y}\\
{\widetilde{\mathscr I}}&= {\mathscr L}_{X|Y}\backslash {\mathscr H}^c_{X|Z} \\
{\widetilde{\mathscr B}}&= {\mathscr L}^c_{X|Y} \cap {\mathscr H}_{X|Z}.
\end{aligned}\label{sasoglu_partition}
\end{equation}}
Apart from transmitting the information, the coding scheme aims 
to convey the bits in $\widetilde\cR_2$ to Receiver 1 using the good indices of Receiver 1, at the same
time preserving the security requirement. This is accomplished using the ``chaining'' construction proposed 
in \cite{sasoglu}\footnote{The term ``chaining'' was introduced later in \cite{Mondelli14}.}
and shown in Fig.~\ref{figure2}. As the figure suggests, the bits in $\widetilde\cR_2(j)$ contained in block $j$
are transmitted over the channel as a part of the message of block $j-1,$ for all $j=2,\dots,m.$ This enables 
Receiver 1 to recover these bits reliably as a part of the successive decoding procedure for block $j,$ which is performed similarly to \eqref{eq:scd1}. At the same time,
because of the inclusion $\widetilde\cI\subset \cH_{X|Z},$ Receiver 2 does not
have the resources for their reliable decoding, which provides the desired security.
 
\begin{figure}[t]
\centering
\includegraphics[scale=0.65]{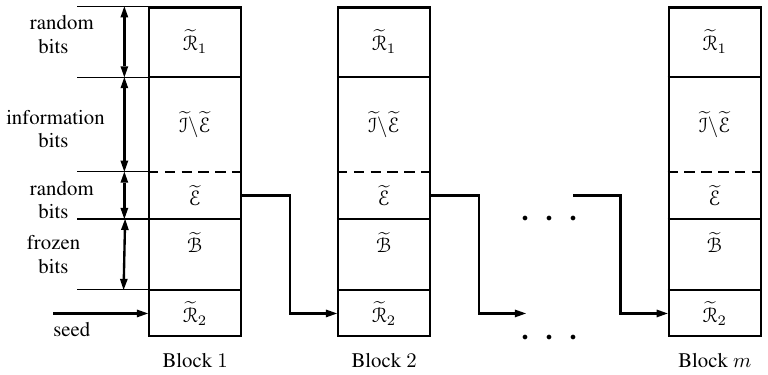}
\caption{Block diagram of the coding scheme in \cite{sasoglu}. 
}\label{figure2}
\end{figure}

The analysis of the transmission is performed based on $m$ blocks of $N$ bits as opposed to a single block. 
The seed for the transmission is provided by choosing $|{\widetilde{\mathscr R}}_2|=o(N)$ random bits which are
shared with Receiver 1 (more on this below).  In each of the blocks 1 to $m$, the bits 
indexed by the set ${\widetilde{\mathscr I}}\backslash\widetilde\cE$
are used to send the message. Here $\widetilde\cE$ is an arbitrary subset of the set $\widetilde{\mathscr I}$ of size 
${\widetilde{\mathscr R}}_2$ whose role is explained below.

The bits in ${\widetilde{\mathscr R}}_1$ are selected
randomly, and the bits $u_i, i\in{\widetilde{\mathscr B}}$ are frozen, i.e., assigned arbitrarily and shared
in advance with Receiver 1 (they may be also known to Receiver 2 without compromising secrecy).

The assignment of bits in the set ${\widetilde{\mathscr R}}_2$ in block $j$ depends on the block index.
In block 1 these bits are set equal to the message bits of the seed block. In block $j=2,\dots,m$ the bits
indexed by the set ${\widetilde{\mathscr R}}_2$ are set to be equal to the bits in the set $\widetilde\cE$
in block $j-1,$ representing the chaining procedure. 

Having formed the sequence $u^N(j)$ in block $j=1,\dots,m$, the encoder passes it through the polarizing transform and
transmits the sequence $x^N=u^N G_N$ over the channel. 
The only remaining problem is to convey to Receiver 1 the bits of $\widetilde \cR_2$ of the first block. 
This is done by performing the seed transmission of a block which encodes the $|\widetilde \cR_2|$ bits
using some error correcting code of length $N$. 
As claimed in \cite{sasoglu}, it is possible to choose such a code to fulfill the reliability and security
requirements because its rate can be made arbitrarily close to zero.
The fact that the seed code needs to encode only a small number $o(N)$ of message bits follows from the degradedness
assumption, which is therefore essential in this construction.


As shown in \cite{sasoglu}, this scheme satisfies both constraints 
\eqref{reliability} and \eqref{security} under the assumption that the channel to the eavesdropper is degraded with respect
to the channel $W_1.$ The rate of communication between the transmitter and Receiver 1 can be made arbitrarily close to the 
value $I(W_1)-I(W_2)$ since 
the assumption that $W_2$ is degraded with respect to $W_1$
ensures that $|{\widetilde{\mathscr R}}_2|=o(N)$, i.e., there is
no asymptotic loss in rate by removing the bits $\{u_i, i\in {\widetilde{\mathscr E}}\}$ from the message in order
to support the strong security condition.

\section{Polar Coding for the Wiretap Channel}
\label{sect:wt}

In this section, we show that secrecy capacity for the wiretap channel given by 
Theorem \ref{thm:dg} is achievable using polar codes. For this purpose, we consider
the RVs $V,X,Y,Z$ as described by Theorem \ref{thm:dg}, i.e., we assume some fixed
distributions $P_V, P_{X|V}$ and the conditional distributions $P_{Y|X}=W_1$, $P_{Z|X}=W_2$
that satisfy the Markov condition $V\to X \to Y,Z$ and maximize the expression in \eqref{eq:cs}.
Define the RV $T^{N}=V^{N}G_N,$ where $V^N$ denotes $N$ independent realizations of $V$.
The transformation $V^N\to T^N$ induces conditional distributions
$P_{T_i|T^{i-1}}$ derived from the corresponding distributions
of the RVs $V_i.$
Define the sets ${\mathscr H}_V$, ${\mathscr L}_V$, ${\mathscr H}_{V|Y}$, 
${\mathscr L}_{V|Y}$ as follows:
\begin{align*}
{\mathscr H}_V&=\{i\in [N]: Z(T_i|T^{i-1}) \geq 1-\delta_N\}  \\
\cL_V&= \{ i\in [N]: Z(T_i|T^{i-1}) \leq \delta_N\}\\
{\mathscr H}_{V|Y}&= \{i\in [N]: Z(T_i|T^{i-1},Y^{N}) \geq 1-\delta_N\}\\
{\mathscr L}_{V|Y}&= \{i\in [N]: Z(T_i|T^{i-1},Y^{N}) \leq \delta_N\}
\end{align*}
and define the sets ${\mathscr H}_{V|Z}$, ${\mathscr L}_{V|Z}$ analogously.
The cardinalities of these sets satisfy 
$\frac{1}{N} |{\mathscr H}_V|\to H(V), \frac{1}{N} |{\mathscr L}_{V|Y}|\to 1-H(V|Y), \frac{1}{N} |{\mathscr H}_{V|Z}|\to H(V|Z)
$ as $N\to\infty$ \cite[Theorem 1]{Honda13}.

Define a partition of $[N]$ into the following sets which will be used to describe the coding scheme\footnote{We use the notation $\cI,\cB$ in this section in the sense different from Sec.~\ref{sect:closer}. Since both uses are localized to their respective sections, this should not
cause confusion.}
\begin{equation}\label{eq:p}
\begin{aligned}
{\mathscr I}&= {\mathscr H}_{V}\cap {\mathscr L}_{V|Y}\cap {\mathscr H}_{V|Z} \\
{\mathscr B}&= {\mathscr H}_{V}\cap {\mathscr L}_{V|Y}^c \cap {\mathscr H}_{V|Z} \\
{\mathscr R}_1&={\mathscr H}_{V}\cap {\mathscr L}_{V|Y}\cap {\mathscr H}^c_{V|Z} \\
{\mathscr R}_2&= {\mathscr H}_{V}\cap {\mathscr L}^c_{V|Y}\cap {\mathscr H}^c_{V|Z} \\
{\mathscr D}&= {\mathscr H}^c_{V}.
\end{aligned}
\end{equation}
The partition of $[N]$ that thus arises is illustrated in Figure 3. 
It will be seen that the subsets ${\mathscr I}$, ${\mathscr R}_1$, ${\mathscr R}_2$, ${\mathscr B}$
in our coding scheme play the role similar to that of the analogously denoted subsets in \eqref{sasoglu_partition}.
Importantly, the cardinality of ${\mathscr R}_2$ is not $o(N)$ any more, which requires adjustments in the transmission scheme. 
Moreover, there is an extra randomness needed to determine
the sequence to be transmitted, as will be seen in the encoding algorithm below.

\begin{figure}[t]
\includegraphics[scale=0.6]{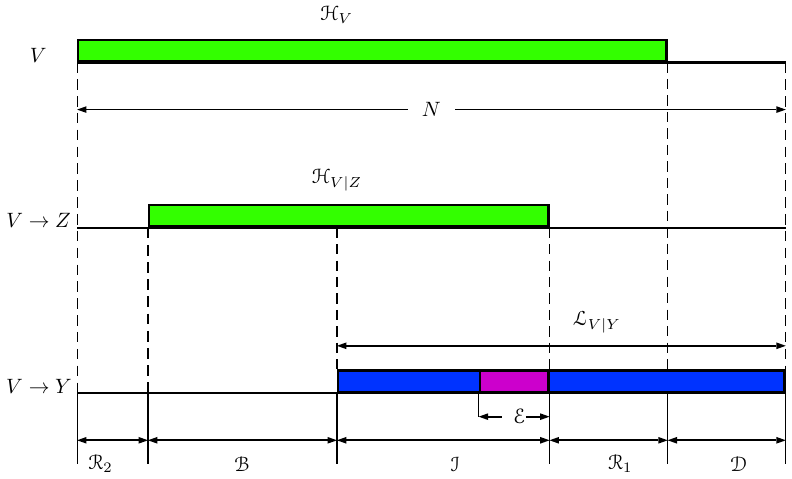}
\vspace*{.1in}
\begin{caption} {Partition of $N$ coordinates of the block for transmission over the wiretap channel $\sW$; see \eqref{eq:p}.
The highlighted part of the top block represents high-entropy coordinates for
the distribution $P_V$. Similarly, in the middle block we highlight the high-entropy coordinates
of the distribution $P_{V|Z}$ and in the bottom block the low-entropy
coordinates for the distribution $P_{V|Y}$. }
\end{caption}\label{fig:partition}
\end{figure}

{\bfit Encoding:} We build on the chaining idea of \cite{sasoglu}, connecting multiple blocks
in a cluster whose performance in transmission will attain the desired goals.
The cluster consists of a seed block and a number, $m$, of other blocks.
The seed block consists of $|{\mathscr R}_2|$ random bits. Even though the cardinality of the set ${\mathscr R}_2$
constitutes a nonvanishing proportion of $[N]$, the rate of the seed $|{\mathscr R}_2|/ m N$ can be
made arbitrarily small by choosing $m$ sufficiently large. (For example, one can
set $m=N^{\alpha}$ for some $\alpha>0$, and let $N\to\infty.$)

\begin{figure}[t]
\centering
\includegraphics[scale=0.55]{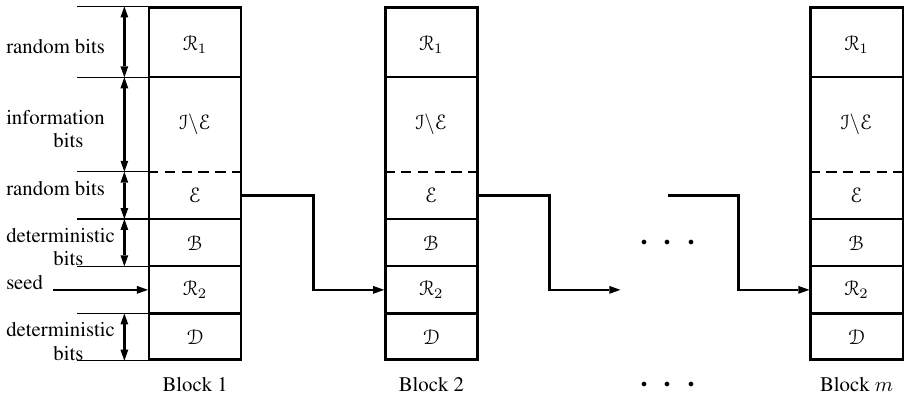}
\caption{Block diagram of the wiretap coding scheme: Forming the blocks $t^N(j),j=1,\dots,m.$}
\label{figure3}
\end{figure}
Let us describe the encoding and decoding procedures. The transmission is accomplished using $m$ blocks of
 length $N$ each and the seed block.
 Every block $t^N=t^N(j), j=1,\dots,m$ contains a group of almost deterministic bits, denoted by $\cD$ in Figure \ref{figure3}. The values of these bits are assigned according to a {family of deterministic rules} $\{\lambda_i, i\in\cD\}$ described in \eqref{eq:scd3}.
The bits $t_i,i\in\cB$ in each of the $m$ blocks are determined similarly, based on $\{\lambda_i, i\in\cB\}$.
\textcolor{black}{These rules are chosen the same for each block, and are shared with Receiver 1. 
Even if the rate $(|\cB|+|\cD|)/mN$ can be made arbitrarily small similarly to $|\cR_2|/mN$, it may be
the case that such a rule sharing requires a positive rate of secure communication. As a remedy to this
problem, the bits $t_i,i\in\cB$ can be chosen randomly and be the same for each block, and shared with
Receiver 1 secretly. This pre-shared randomness still allows one to maintain secrecy as shown by the results presented below.
We note that \cite{chou14} also relies on the same assumption for their proof of the secrecy condition.}

The remaining subsets of coordinates are filled as follows. For block 1, the bits in the set $\cR_2$ are assigned the value of the bits of the seed block, while for blocks $j=2,\dots,m$ these bits are set to be equal to the bits in $\cE(j-1)$ of block $j-1.$ 
Here, $\cE$ is a subset of $\cI$ having the same size as $\cR_2$.
The messages are stored in the bits indexed by $\cI\backslash\cE.$ The randomly
chosen bits in $\cE$ are written in the coordinates that are good for Receiver 1 and contained in the bad
(high-entropy) set of Receiver 2. These bits are transmitted to Receiver 1 in block $j$ and used for 
the decoding of the message contained in block $j+1,$ for all $j=1,2,\dots,m-1.$ Finally, the bits in 
$\cR_1$ are assigned randomly and uniformly for each of the $m$ blocks.
The diagram of the chaining construction for encoding is given in Figure \ref{figure3}.
Let $\cE(0)$ be the message sequence encoded in
the seed block, let $\cE(j), j=1,\dots,m$ be the corresponding sequences in the other blocks 
(see Fig.~\ref{figure3}), and let
$r(j), j=1,\dots,m$ the randomly chosen bits in each block. The encoding proceeds as follows:
In Block 1 we put
\begin{align}
t_i= \begin{cases}
\lambda_i(t^{i-1}(1)) &\text{if}\,\, i\in \cB\cup \cD \\
\cE_i(0) &\text{if}\,\, i\in \cR_2 \\
r_i(1) &\text{if}\,\, i\in \cI\cup \cR_1
\end{cases}
\end{align}
and in blocks $j=2,\dots,m$ we put
\begin{align}
t_i= \begin{cases}
\lambda_i(t^{i-1}(j)) & \text{if}\,\, i\in B\cup D \\
\cE_i(j-1) & \text{if}\,\, i\in R_2 \\
r_i(j) & \text{if}\,\, i\in I\cup R_1,
\end{cases}
\end{align}
where the family of mappings $\lambda_i$ is chosen from the ensemble
$P_{\Lambda_i}[\Lambda_i (t^{i-1})=1]=P_{T_i|T^{i-1}}(1|t^{i-1})$ for 
all $t^{i-1}\in \{0,1\}^{i-1}$.

Once the blocks $t^N(j), j=1,2,\dots,m$ are formed, we find $m$ sequences $v^N(j)=t^N(j)G_N$ by using the polarizing transform. 
Finally, given $v^{N},$ the codeword to be sent over the wiretap channel will be chosen as $x^{N}$
with probability $P_{X^{N}|V^{N}}(x^{N}|v^{N})=\prod_{i=1}^N P_{X|V}(x_i|v_i)$, where
$P_{X|V}$ is the conditional distribution induced by the joint distribution of the RVs $V$ and $X$.
This logic is suggested by the proof of the capacity theorem, Theorem \ref{thm:dg}, which
first considers ``transmitting'' the RV $V^N$ to the receivers, and then choosing $X^N$ so as to satisfy the Markov chain condition in the statement.


\vspace*{.1in}{\bfit Decoding:}  

Let $y^N(1),\dots,y^N(m)$ be the sequences that Receiver 1 observes on the output of the channel $W_1.$
The decoding rule is as follows:
    \begin{equation}\label{eq:dec}
    \hat{t}_i= \begin{cases}
\lambda_i (\hat{t}^{i-1}),  \hspace{1.6in}\text{if}\,\, i\in  {\mathscr B}\cup {\mathscr D} \\
\argmax_{t\in \{0,1\}} P_{T_i|T^{i-1}Y^{N}}(t|\hat{t}^{i-1},y^{N}), \text{if}\,\, i\in {\mathscr I}\cup{\mathscr R}_1 \\
\cE_i(j-1), \hspace{1.6in}\text{if} \,\, i\in {\mathscr R}_2
\end{cases}
   \end{equation}
where $P_{T_i|T^{i-1}}$ and $P_{T_i|T^{i-1},Y^{N}}$ are the conditional distributions induced by the
joint distribution of the RVs $V^{N}$ and $Y^{N}$ (this rule is applied to each of the blocks $j=1,\dots,m$, and $j$ is mostly
omitted from the notation).
\remove{
Denote by $E(0)$ the message sequence encoded in
the seed block, and by $E(j-1), j=2,\dots,l+1$ the sequence of bits $\{t_i, i\in {\mathscr E}\}$ contained in block $j-1;$  
see Fig.~\ref{figure3}.
Moreover, let $y^N(1),\dots,y^N(l)$ be the sequences that Receiver 1 has observed as $l$ blocks are transmitted.
 For $j=1,\dots,l$, the decoding rule is as follows:
\begin{equation}\label{eq:dec}
\hat{t}_i(j)= \begin{cases}
\argmax_{t\in \{0,1\}} P_{T_i|T^{i-1}}(t|\hat{t}^{i-1}(j)), &\text{if}\,\, i\in {\mathscr D}\cup {\mathscr B} \\
\argmax_{t\in \{0,1\}} P_{T_i|T^{i-1},Y^{N}}(t|\hat{t}^{i-1}(j),y^{N}(j)), &\text{if}\,\, i\in {\mathscr I}\cup{\mathscr R}_1\\
E_i(j-1), &\text{if} \,\, i\in {\mathscr R}_2
\end{cases}
\end{equation}
where $P_{T_i|T^{i-1}}$ and $P_{T_i|T^{i-1},Y^{N}}$ are the conditional distributions induced by the
joint distribution of the RVs $V^{N}$ and $Y^{N}$. }

Let us show that the described scheme attains the secrecy capacity of $\sW.$ Namely, the following is true.
\begin{theorem}\label{prop:wiretap} 
For any $\gamma>0$, $\epsilon>0$ and $N\to\infty$ it is possible to choose $m$ so that the transmission scheme described 
above attains the transmission rate $R$ that is within $\gamma$ of the secrecy capacity of $\sW$ \eqref{eq:cs}
and the information leaked to Receiver 2 satisfies the strong secrecy condition \eqref{security}.
\end{theorem}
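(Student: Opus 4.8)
The plan is to establish the two requirements of Definition \ref{def:1.1} separately, treating reliability first and then strong secrecy, and then to verify that the rate penalty caused by the chaining overhead vanishes as $m\to\infty$. For \emph{reliability}, I would argue that the successive-cancellation decoder \eqref{eq:dec} applied to each block $j=1,\dots,m$ recovers $t^N(j)$ correctly with probability at least $1-N2^{-N^\beta}$, exactly as in the single-block bound \eqref{eq:perr}: the bits in $\cB\cup\cD$ are reproduced via the shared maps $\lambda_i$, the bits in $\cR_2(j)$ are known to Receiver 1 from having decoded $\cE(j-1)$ in the previous block (or from the seed block when $j=1$), and the remaining bits in $\cI\cup\cR_1$ are MAP-decoded; since $\cI\cup\cR_1\subseteq\cL_{V|Y}$, each such bit channel has Bhattacharyya parameter at most $\delta_N$, so a union bound over the $N$ positions and then over the $m+1$ blocks gives a total error probability at most $(m+1)N2^{-N^\beta}$, which tends to $0$ provided $m$ grows subexponentially in $N$ (e.g.\ $m=N^\alpha$). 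A small subtlety flagged in Remark \ref{rem:oN} must be handled: the five sets in \eqref{eq:p} partition $[N]$ only up to the $o(N)$ coordinates that are in neither $\cH_{V|Z}$ nor its "good" complement, and similarly for the other polarizations; these stray coordinates should be folded into $\cD$ (or frozen and shared), contributing only $o(N)$ extra frozen bits and hence no asymptotic rate loss. For the seed block I would invoke, as in \cite{sasoglu}, a low-rate error-correcting code of length $N$ encoding the $|\cR_2|$ seed bits; since $|\cR_2|\le N$ this is a fixed-length-$N$ transmission of rate $\le 1$, but its \emph{contribution to the overall rate} is $|\cR_2|/(mN)\to 0$, so reliability of the seed can be obtained at the cost of one more vanishing term.

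For \emph{strong secrecy} I would bound $I(M;Z^{mN})$, where $M$ collects the message bits $\{t_i(j):i\in\cI\setminus\cE,\ j=1,\dots,m\}$ and $Z^{mN}$ is Receiver 2's observation of all $m$ transmitted blocks plus the seed. The key structural fact is the inclusion $\cI\cup\cB\subseteq\cH_{V|Z}$ built into \eqref{eq:p}: every bit carrying message content (indeed every bit in $\cI$, whether it holds a message bit or a bit of $\cE$ to be chained) sits in a coordinate whose bit-channel to Receiver 2 has $Z(T_i\mid T^{i-1},Z^N)\ge 1-\delta_N$, hence conditional entropy $H(T_i\mid T^{i-1},Z^N)\ge 1-\delta_N'$ for a suitable $\delta_N'$ by the standard relation between Bhattacharyya parameter and conditional entropy. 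I would first prove a \emph{single-block} leakage bound: because the encoder draws $X^N$ from $P_{X^N|V^N}=\prod_i P_{X|V}(x_i|v_i)$ and the $V_i$ are i.i.d.\ with the optimizing distribution, the joint law of $(T^N,Z^N)$ in each block is precisely the one defining $\cH_{V|Z}$, so
\[
I\big(\{T_i:i\in\cI\cup\cB\};Z^N\mid T^{(\cI\cup\cB)^c}\big)\;\le\;\sum_{i\in\cI\cup\cB}\big(1-H(T_i\mid T^{i-1},Z^N)\big)\;\le\;N\delta_N',
\]
and from this, after accounting for the frozen/$\lambda$ bits in $\cB\cup\cD$ being constants known to everyone and the bits in $\cR_1\cup\cR_2$ being auxiliary randomness, one extracts $I(M_j;Z^N(j))\le N\delta_N'$ for the $j$-th block's message in isolation.

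The main obstacle, and the place where the chaining really has to be controlled, is passing from per-block leakage to \emph{joint} leakage across the cluster: $M$ consists of $m$ correlated message-pieces (the $\cE$ bits of block $j$ are reused as the $\cR_2$ bits of block $j+1$), and $Z^{mN}$ is the concatenation of all Receiver-2 observations together with whatever the seed transmission reveals, so one cannot simply sum per-block bounds without justification. I would handle this by the telescoping/chain-rule argument used in \cite{sasoglu} and \cite{Mondelli14}: write $I(M;Z^{mN})\le I(T^{mN}_{\text{all }\cI,\cR_1,\cR_2\text{ bits}};Z^{mN})$, expand by the chain rule block by block, and at each step use the Markov structure $T^N(j)\to X^N(j)\to Z^N(j)$ together with the fact that conditioning on the already-revealed blocks cannot lower the conditional entropy of the high-entropy coordinates of the current block by more than a cumulative $o(N)$ term — here one needs that the bit-channel degradation caused by side information about \emph{other} blocks is absorbed into the $\delta_N$ slack, which is where the concatenation $m=\text{poly}(N)$ and the exponential smallness $\delta_N=2^{-N^\beta}$ are used: $m\cdot N\delta_N'\to 0$. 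The seed contributes at most $|\cR_2|\le N$ bits of information, i.e.\ $H(\text{seed})\le N$, and since the seed's message bits are \emph{not} part of $M$ (they only seed the chain) this term enters the bound as an additive $N$ that is then divided away in the per-channel-use normalization but, more carefully, must be shown not to leak about $M$ — which follows because the seed is independent of $M$ and, conditioned on the seed, each block's leakage is still $\le N\delta_N'$. Collecting everything, $I(M;Z^{mN})\le (m+1)N\delta_N'=o(1)$ and the rate is $\frac{|\cI\setminus\cE|\cdot m}{(m+1)N}\to \frac{|\cI|-|\cR_2|}{N}\to I(V;Y)-I(V;Z)$ by the cardinality limits for $\cH_V,\cL_{V|Y},\cH_{V|Z}$ stated before \eqref{eq:p}, which for the optimizing $(V,X)$ equals $C_s$; choosing $m$ large makes the loss smaller than $\gamma$. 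I expect the delicate bookkeeping in the joint-leakage chain-rule step — in particular verifying that conditioning on previously transmitted blocks keeps every message-bearing coordinate in the high-entropy regime up to a summable error — to be the technical heart of the argument.
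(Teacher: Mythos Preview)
Your overall architecture---rate calculation, per-block reliability via a union bound over $\cL_{V|Y}$, and secrecy via the inclusion $\cI\subseteq\cH_{V|Z}$ plus a block-by-block telescoping---matches the paper's. But there is a genuine gap in the secrecy argument.

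You assert that ``the joint law of $(T^N,Z^N)$ in each block is precisely the one defining $\cH_{V|Z}$,'' and from this read off $H(T_i\mid T^{i-1},Z^N)\ge 1-\delta_N'$. This is not true. The encoder does \emph{not} produce $V^N$ i.i.d.\ $\sim P_V$: the bits $t_i$ for $i\in\cB\cup\cD$ are fixed by deterministic maps $\lambda_i$, and the bits in $\cR_2(j)$ are copied from $\cE(j-1)$, so the induced law $Q_{V^N X^N Z^N}$ only approximates $P_{V^N X^N Z^N}$. The entropy lower bound you want holds under $P$, not under $Q$. The paper closes this gap by first invoking the Honda--Yamamoto total-variation estimate $\|P-Q\|\le N2^{-N^\beta}$ (its Eq.~\eqref{proximity}) and then transferring the entropy bound from $P$ to $Q$ via a continuity-of-entropy inequality (Cover--Thomas, Thm.~17.3.3). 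This transfer costs a factor of $N^2$ per coordinate, which is why the paper's per-block leakage is $O(N^3 2^{-N^\beta})$ rather than your $N\delta_N'$. Your reliability bound has the same omission (the paper gets $2N2^{-N^\beta}$ per block, the extra $N2^{-N^\beta}$ coming from $\|P-Q\|$), though there it is harmless.

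Second, your telescoping sketch underestimates the cost of the chaining correlation. The paper does not simply sum per-block leakages; it proves two lemmas. Lemma~\ref{security_lemma_wiretap} bounds $I(T[\cI\setminus\cA];T[\cA],Z^N)=O(N^3 2^{-N^\beta})$ for any $\cA\subset\cI$ in a single block, and Lemma~\ref{security_lemma_wiretap2} shows inductively that $I(E_j;M^j,Z^j)=O(jN^3 2^{-N^\beta})$, using the conditional Markov chain $M^{k-1}Z^{k-1}\to E_{k-1}\to E_k$ given $(M_k,Z_k)$ that the chaining structure supplies. The linear growth in $j$ means the final bound is $I(M^m;Z^m)=O(m^2 N^3 2^{-N^\beta})$, not $O(mN\delta_N')$. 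Your phrase ``conditioning on previously transmitted blocks keeps every message-bearing coordinate in the high-entropy regime up to a summable error'' is pointing at the right phenomenon but does not identify the Markov relation that makes it go through; this is exactly where the paper isolates $E_{k-1}$ as the sufficient statistic linking past and present.
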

\begin{proof}
Throughout the proof we assume that the RVs $V,X,Y,Z$ are as given in Theorem \ref{thm:dg} and denote by $P_{VXYZ}$ their joint 
distribution. The distribution $P_{V^NX^NY^NZ^N}=\prod_{i=1}^N P_{VXYZ}(v_i,x_i,y_i,z_i)$ refers to $N$ independent repetitions
of the RVs.

The rate of the proposed coding scheme is
\begin{equation*}
\frac{m (|{\mathscr I}|-|{\mathscr E}|)}{m N+|{\mathscr R}_2|}= \frac{ (|{\mathscr H}_{V}\cap {\mathscr L}_{V|Y}|-|{\mathscr H}_{V}\cap {\mathscr H}^c_{V|Z}|)}{ N+ |{\mathscr R}_2|/m},
\end{equation*}
which approaches $I(V;Y)-I(V;Z)$ as $N,m\to\infty,$
where the term $|R_2|$ in the denominator is due to the shared seed
for block 1.
According to Theorem \ref{thm:dg}, this is the target rate that we want to achieve for given $V$ and $X$ satisfying
$V\to X \to Y,Z$ and $P_{Y|X}=W_1$, $P_{Z|X}=W_2$.

Now let us prove the reliability and security conditions.
Let us introduce the following RVs: Let $M^{m}=(M_1,M_2,\dots,M_m)$ correspond to the sequence of message bits
 $\{t_i, i\in {\mathscr I} \backslash {\mathscr E}\}$ transmitted in blocks $1,\dots,m,$ and let
$Z^m= (Z^N(1),\dots,Z^N(m))$ be a sequence of observations of Receiver 2 as a result of the transmission of the $m$ blocks. 
{Further, let  $E_j$ correspond to the bits contained in the subset $\cE(j), j=1,\dots,m.$}

\vspace*{.1in}
{\em Reliability:} The claim of low error probability for Receiver 1 follows from the results of
\cite{Honda13}. Since our communication scheme is more complicated compared to \cite{Honda13}, we give 
some additional details. 

Since we know the distribution $P_{V^N},$
we can compute the distributions $P_{T_i|T^{i-1}}, i=1, \dots, N.$ Now assume that the assignments of the bits indexed by $\cB$ and 
$\cD$ are done randomly by sampling from the distribution $P_{T_i|T^{i-1}},$ for each of the blocks $1,\dots,m.$
Let $Q_{V^NX^NY^NZ^N}$ be the joint distribution of the correspoding sequences arising from this assignment.
Denote by $\|\cdot\|$ the $l_1$ distance between the distributions. From the proof of Lemma 1 in \cite{Honda13} it follows that
\begin{equation}
\|P_{V^NX^NY^NZ^N}-Q_{V^NX^NY^NZ^N}\| \leq N 2^{-N^{\beta}}
\label{proximity}
\end{equation}
holds for all the $m$ blocks of transmission. Moreover, since the message bits 
are entirely contained in the set of good bits for channel $W_1$, 
the probability of error is bounded by
   \begin{gather*}
   \|P_{V^NX^NY^N}-Q_{V^NX^NY^N}\|+\sum_{i\in {\mathscr I}} Z(T_i| T^{i-1}, Y^N) \nonumber\\ \leq 2N 2^{-N^{\beta}}
   \end{gather*}
for each individual block (see \cite{Honda13}, Eqns (59)-(60)). Therefore,
there exists a family of deterministic rules $\{\lambda_i, i\in \cB\cup\cD\}$
such that the overall error probability for the successive decoding procedure \eqref{eq:dec} is at most
$2mN2^{-N^{\beta}}, \beta\in (0,1/2)$. 
We conclude that the probability that Receiver 1 decodes the information bits correctly approaches $1$ as $N$ tends to infinity.

\remove{
Namely, since the message bits are entirely contained in the set of good bits for channel $W_1$, 
the successive decoding procedure \eqref{eq:dec} will recover their values with error at most
$mN2^{-N^{\beta}}, \beta\in (0,1/2)$. 
We conclude that the probability that Receiver 1 decodes the information bits correctly approaches $1$ as $N$ tends to infinity. 
}

\vspace*{.1in}

{\em Security:} We will show that condition \eqref{security} is fulfilled for the sequence of $m$ blocks of transmission.
{For that purpose, we will first prove the following lemma. 
\begin{lemma} Let $\cA\subset \cI$ be a subset of coordinates, and let
let $T[{\mathscr A}]=\{t_i, i\in{\mathscr A}\}$ and $T[\cI\backslash {\mathscr A}]=\{t_i, i\in\cI\backslash{\mathscr A}\}.$ 
Then
\begin{equation}
I(T[\cI\backslash{\mathscr A}]; T[{\mathscr A}], Z^N)=O(N^3 2^{-N^{\beta}}). \label{lemma_ineq_wiretap}
\end{equation}
\label{security_lemma_wiretap}
\end{lemma}
\begin{proof}
By definition in \eqref{eq:p} we have the inclusion $\cI\subseteq \cH_{V|Z}$. 
Let us label the indices in $\cI$ as $a_1,a_2,\dots,a_{|\cI|}$,
and assume that $a_1<a_2<\dots <a_{|\cI|}$. 
Using the inequality $Z(X|Y)^2\le H(X|Y)$ \cite{Arikan2010}, we obtain the estimate
\begin{align}
H_P(T_{a_i}|T^{a_i-1}, Z^N)&\triangleq H_P(T_{a_i}|T_1,\dots,T_{a_i-1},Z^N) \nonumber\\ &\ge 1-2\delta_N=1-2^{-N^{\beta}+1} 
\label{prox_review1}
\end{align}
for all $i=1,2,\dots,|\cI|$, where $H_P$ refers to the entropy under the distribution $P_{V^NX^NZ^N}$.
Our aim is to find an estimate on the mutual information in \eqref{lemma_ineq_wiretap}
under the distribution $Q$. To find an upper bound for the entropy $H(T_{a_i}|T^{a_i-1}\, Z^N)$ (computed under $Q$), we use a standard estimate (e.g., \cite[Theorem 17.3.3]{Cover06}),
and write
\begin{align*}
&|H(T_{a_i}|T^{a_i-1},Z^N)- H_P(T_{a_i}|T^{a_i-1}, Z^N)| \nonumber\\ &\leq - \|P_{V^NX^NZ^N}-Q_{V^NX^NZ^N}\| \nonumber\\ &\quad\times\log 
\frac{\|P_{V^NX^NZ^N}-Q_{V^NX^NZ^N}\|}{|{\mathcal Z}|^N 2^N }
\end{align*}
where $|{\mathcal Z}|$ refers to the alphabet size for a single observation of the eavesdropper. Then, we get from \eqref{proximity} and \eqref{prox_review1} that
\begin{align}
H(T_{a_i}|T^{a_i-1}, Z^N) &\geq 1-O(N^2 2^{-N^{\beta}})- O(N^{({\beta}+1)} 2^{-N^{\beta}}) \nonumber\\ &= 1-O(N^2 2^{-N^{\beta}}). \label{lemma_ineq1_wiretap}
\end{align}
Observe that \eqref{lemma_ineq1_wiretap} implies
\begin{align}
H(T_{a_i}|T_{a_1}, T_{a_2}, \dots, T_{a_{i-1}}, Z^N) &\ge H (T_{a_i}|T^{a_i-1}, Z^N)\nonumber\\  &\ge 1-O(N^2 2^{-N^{\beta}})  \label{lemma_ineq2_wiretap}
\end{align}
for all $i\in \cI.$ 
\newline
Then we obtain
   \begin{align}
H(T[\cI\backslash {\mathscr A}]\,| T[{\mathscr A}], Z^N)&= H( T[\cI]\,| Z^N)- H(T[{\mathscr A}]\,| Z^N) \nonumber\\
&\hspace{-0.5in}\geq H( T[\cI]| Z^N)- |{\mathscr A}| \nonumber\\
&\hspace{-0.5in} =\sum_{i=1}^{|\cI|} H(T_{a_i}|T_{a_1}, T_{a_2}, \dots, T_{a_{i-1}}, Z^N) - |{\mathscr A}| \nonumber\\
&\hspace{-0.5in} \geq |\cI | (1-O(N^2 2^{-N^{\beta}}) )- |{\mathscr A}|  \label{lemma_ineq3_wiretap} \\
&\hspace{-0.5in} \geq |\cI \backslash {\mathscr A}|- O(N^3 2^{-N^{\beta}}) \nonumber
   \end{align}
where \eqref{lemma_ineq3_wiretap} is due to \eqref{lemma_ineq2_wiretap}.
\remove{
Inequality \eqref{lemma_ineq2_wiretap} is valid for any subset of indices in $\cI$
including in particular the subset ${\mathscr A},$ say ${\mathscr A}=\{a_1,a_2,\dots,a_{|{\mathscr A}|}\},$ which implies
\begin{equation*}
H(T[\cI\backslash {\mathscr A}]| T[{\mathscr A}]\, Z^N)=\sum_{i=|{\mathscr A}|+1}^{|\cI|} H(T_{a_i}|T_{a_1} T_{a_2} \dots T_{a_{i-1}}\, Z^N) \ge 
|\cI\backslash {\mathscr A}| (1-2\delta_N).
\end{equation*}
}
This completes the proof of \eqref{lemma_ineq_wiretap}.
\end{proof}
%

\begin{lemma}
Let  $M^j\triangleq (M_1,M_2,\dots,M_j)$ and $Z^j\triangleq(Z_1,Z_2,\dots,Z_j).$ 
For all $j=1,\dots, m$, we have
   \begin{equation*}
I(E_j; M^j ,Z^j)=O(jN^3 2^{-N^{\beta}}).
   \end{equation*}
\label{security_lemma_wiretap2}
\end{lemma}
\begin{proof} The proof is by induction on $j.$ The base case $j=1$ follows from Lemma \ref{security_lemma_wiretap}.
Now assume that the claim of the lemma is true for $j=k-1$ and write 
    \begin{align*}
I(E_k; M^k, Z^k) &= I(E_k; M_k, Z_k)\\&\quad+ I(E_k; M^{k-1}, Z^{k-1}| M_k, Z_k). 
  \end{align*}
Using the chaining structure shown in Figure \ref{figure3} we argue that the only
part of the transmission that connects block $k-1$ to block $k$ is given by $E_{k-1}.$
This implies that, conditional on $(M_k,Z_k)$, we have a Markov chain 
$M^{k-1}Z^{k-1}\to E_{k-1}\to E_k,$ so
  \begin{align*}
  &I(E_k; M^{k-1}, Z^{k-1}| M_k, Z_k) \\ &\quad\le I( E_{k-1}; M^{k-1}, Z^{k-1}| M_k, Z_k) 
  \\&\quad\le I( E_{k-1}; M^{k-1}, Z^{k-1}),
  \end{align*}
where the second inequality is due to the Markov chain $M^{k-1} Z^{k-1}\to E_{k-1}\to M_k Z_k.$
Therefore,
  \begin{align}
I(E_k; M^k, Z^k)&\leq I(E_k; M_k ,Z_k)+ I(E_{k-1}; M^{k-1}, Z^{k-1})  \nonumber\\
&\hspace{-0.2in}=O(N^3 2^{-N^{\beta}})+ O( (k-1) N^3 2^{-N^{\beta}}) \label{wiretap_secure_2} \\
&\hspace{-0.2in}= O( k N^3 2^{-N^{\beta}}). \nonumber
   \end{align}
   Here \eqref{wiretap_secure_2} is due to Lemma \ref{security_lemma_wiretap} and the induction hypothesis. This completes the proof.
\end{proof}

Now we are ready to complete the proof of the strong security condition. For this purpose, consider the following sequence
of inequalities:
\begin{align}
&I( M^m; Z^m) \leq I(M^m; Z^m, E_m)  \nonumber \\
&= I(M^{m-1}; Z^m, E_m)+ I(M_m; Z^m, E_m | M^{m-1}) \nonumber \\
&\leq I(M^{m-1}; Z^{m-1}, E_{m-1}) + I(M_m; Z^m, E_m | M^{m-1}) \label{wiretap_secure_4}   \\
&\leq I(M^{m-1}; Z^{m-1}, E_{m-1}) + I(M_m; Z^m E_m , M^{m-1})  \nonumber\\
& = I(M^{m-1}; Z^{m-1}, E_{m-1}) + I(M_m; E_m, Z_m) \nonumber\\ &\quad+I(M_m; M^{m-1}, Z^{m-1}| E_m, Z_m) \nonumber\\
& \leq  I(M^{m-1}; Z^{m-1}, E_{m-1}) + I(M_m; E_m, Z_m)\nonumber\\ &\quad+ I(E_{m-1}; M^{m-1}, Z^{m-1}) \label{wiretap_secure_5}   \\
& =  I(M^{m-1}; Z^{m-1}, E_{m-1}) + O( N^3 2^{-N^{\beta}})\nonumber\\&\quad+ O( (m-1) N^3 2^{-N^{\beta}}) \label{wiretap_secure_6} \\
&=  I(M^{m-1}; Z^{m-1} E_{m-1}) +O (m N^3 2^{-N^{\beta}}), \label{wiretap_secure_7}
\end{align}
where \eqref{wiretap_secure_4} and \eqref{wiretap_secure_5} are implied by the chaining
structure in Figure \ref{figure3}, and \eqref{wiretap_secure_6} is due to Lemmas \ref{security_lemma_wiretap} and
\ref{security_lemma_wiretap2}.  From \eqref{wiretap_secure_7}, we have
\begin{align*}
 I(M^m; Z^m, E_m) &\leq I(M^{m-1}; Z^{m-1}, E_{m-1}) \\ &\quad+O (m N^3 2^{-N^{\beta}}) 
\end{align*}
which implies
\begin{equation*}
I(M^m;Z^m)\leq I(M^m; Z^m E_m)= O(m^2 N^3 2^{-N^{\beta}}) .
\end{equation*}
Then, recalling that $m=N^{\alpha}, \alpha>0$ is sufficient to  
satisfy the rate constraint, we observe that $I(M^{m};\,Z^{m})= O(N^{3+2\alpha}\, 2^{-N^{\beta}})\to 0$
as $N\to\infty$, as required.
}

\remove{
The proof follows the argument in \cite{sasoglu}. Writing $E_j$ instead of $\cE(j),$ we have
    \begin{align}
I(M^{m}; &Z^{m}) \le I(M^{m}E_m; Z^{m})   \notag\\
& =I(M^mE_m;Z_m)+I(M^mE_m;Z^{m-1}|Z_m)   \notag\\
& =I(M_{m} E_{m};\,Z_m)+I(M^mE_{m};Z^{m-1}|Z_m)  \notag\\
&\le I(M_mE_{m};Z_m)+I(M^mE_{m};Z^{m-1}|Z_m)+I(Z_m;Z^{m-1})+I(E_{m-1};Z^{m-1}|M^mE_{m}Z_m) \label{eq:st}\\
&=I(M_mE_{m};Z_m)+I(M^m(E_{m-1}E_{m})Z_m;Z^{m-1})   \notag\\
&=I(M_mE_{m};Z_m)+I(M^{m-1}E_{m-1}; Z^{m-1})   \notag\\
&\le \sum_{j=1}^m I(M_j E_j;Z_j)   \notag
   \end{align}
where the second equality relies on the Markov condition $M^{m-1}-M_mE_{m}-Z_m$ and the next-to-last line on the condition
 $M_mE_{m}Z_m-M^{m-1}E_{m-1}-Z^{m-1}.$
Since the message bits are entirely within the set of indices $\cH_{V|Z},$ the corresponding Bhattacharyya parameters are close to 1.
Invoking the relation $Z(X|Y)^2\le H(X|Y)$ \cite{Arikan2010}, we find for all $j$ that $I(M_j E_j;Z_j)\le 2 N2^{-N^\beta}.$ 
Hence, $I(M^{m}; Z^{m})\le 2 m N2^{-N^\beta}$, which implies that  
$\lim_{N\to\infty} I(M^{m};\,Z^{m})=0$ as required.
}

We conclude that a secrecy rate of $I(V;Y)-I(V;Z)$ is achievable for any $V$ such that $V\to X\to Y,Z$ holds.
Therefore, the secrecy capacity $C_s$ given by \eqref{eq:cs} is also achievable.
\end{proof}

\remove{\begin{remark} Introducing the probability $P_{M}(m)$ of transmitting the message $m\in{\mathscr M}$
and the conditional probabilities $P^N_{Y|X}(y^N|x^N),$ $P^N_{Z|X}(z^N|x^N)$ of observations for Receivers 1 and 2, we can write
the mutual information term \eqref{security} in more detail as follows:
$$
I(M;Z^N)= \sum_{\begin{substack}{m\in {\mathscr M}\\ x^n\in {\mathscr X}^N}\end{substack}} 
P_{M}(m) f(x^N|m) P^N_{Z|X}(z^N|x^N) \log\biggl[ 
\frac{\sum_{x^N\in {\mathscr X}^N}  f(x^N|m) P^N_{Z|X}(z^N|x^N)}
{ \sum\limits_{{m'\in {\mathscr M}, x^N\in {\mathscr X}^N}}  
P_{M}(m') f(x^N|m') P^N_{Z|X}(z^N|x^N) }\biggr].
$$
\end{remark}}

\section{Polar Coding for Broadcast Channel with Confidential Messages}\label{sect:bcc}
In this section we observe that ideas of the previous section together with some earlier works 
enable us to extend our code construction to a more general communication model introduced in \cite{csiszar2}.
\subsection{The Model}\label{BCC_intro}
Consider a pair of discrete memoryless channels with one transmitter $X$ and two receivers $Y,Z.$
As before, let $W_1:X\to Y$ and $W_2:X\to Z$ denote the channels and let
$\mathscr{X}$ and $\cY,\cZ$ denote the input alphabet and the output alphabets.
We assume that the system transmits three types of messages: 
\begin{enumerate}
\item[(i)\hspace*{.08in}] a message $s_1\in\cS_1$ from $X$ to $ Y$ for which there
are no secrecy requirements;
\item[(ii)~] 
a message $s_2\in\cS_2$ from $X$ to $Y$ which
is secret from $Z;$ 
\item[(iii)] a message $t\in\cT$ from $X$ to $Y$ and $Z$, called the ``common message''.
\end{enumerate}
Following \cite{csiszar2}, we call this communication scheme a {\em broadcast channel with confidential messages} (BCC).

As before, a block encoder for the BCC is a mapping $f:\mathscr{S}_1\times \mathscr{S}_2 \times
\mathscr{T}\to \mathscr{X}^N.$  A stochastic version
of the encoder is a probability matrix $f(x^N|s_1,s_2,t)$ with columns indexed by $x^N\in \cX^N$ and rows indexed by the triples
$(s_1,s_2,t).$ Given such a triple, the stochastic encoder 
samples from the conditional probability distribution on $\cX^N.$
In accordance with the problem statement, there are two decoders: The decoder of Receiver 1 is defined by a mapping
$\phi: {\mathscr Y}^N \to {\mathscr S}_1\times {\mathscr S}_2 \times{\mathscr T}$ and the decoder of Receiver 2 is
a mapping $\psi:{\mathscr Z}^N \to {\mathscr T}$.

Denote the rate of the common message $t$ by $R_0$, and denote the rates of the secret and non-secret messages to $Y$ by $R_s$ and $R_1,$
respectively. 
The analogs of Definitions~\ref{def:1.1},~\ref{def:1.2} in this case look as follows.
\begin{definition}
The encoder-decoder mappings $(f,\phi,\psi)$ give rise to \emph{$(N,\epsilon)$- transmission} over
the BCC if for every $s_1\in {\mathscr S}_1$, $s_2\in {\mathscr S}_2$, $t\in {\mathscr T}$, 
decoder $\phi$ outputs the transmitted triple $(s_1,s_2,t)$ and decoder $\psi$ outputs the message $t$ with probability greater than 
$1-\epsilon$, i.e.,
\begin{gather*}
\sum_{x^N\in {\mathscr X}^N} f(x^N|s_1,s_2,t) P_{Y^N|X^N} (\phi(y^N)=(s_1,s_2,t)|x^N)\\ \geq 1-\epsilon, \\
\sum_{x^N\in {\mathscr X}^N} f(x^N|s_1,s_2,t) P_{Z^N|X^N} (\psi(z^N)=t|x^N) \geq 1-\epsilon.
\end{gather*}
\end{definition}
\begin{definition}
$(R_1,R_s,R_0)$ is an achievable rate triple for the BCC if there exists a sequence
of message sets ${\mathscr S}_{1,N}$, ${\mathscr S}_{2,N}$, ${\mathscr T}_N$ and encoder-decoder triples
$(f_N, \phi_N, \psi_N)$ giving rise to $(N,\epsilon_N)$ transmission with $\epsilon_N\to 0$,
such that
   \begin{equation}\label{eq:ck}
\begin{aligned}
\lim_{N\to\infty} \frac{1}{N} \log |{\mathscr S}_{1,N}| &=R_1 \\
\lim_{N\to\infty} \frac{1}{N} \log |{\mathscr S}_{2,N}| &=R_s \\
\lim_{N\to\infty} \frac{1}{N} \log |{\mathscr T}_{N}| &=R_0 \\
\lim_{N\to\infty} I(S_{2,N}; Z^N)&=0.
\end{aligned}
    \end{equation}
where $S_{2,N}$ is the random variable that corresponds to the secret message.
\label{BCC}
\end{definition}
Note that our definition takes into account the formulation in \cite{csiszar} and is
slightly different from the one in \cite{csiszar2} (where we write $(R_1,R_s,R_0)$, \cite{csiszar2} has $(R_1+R_s,R_s,R_0)$~). 

\nc{\reg}{{\mathfrak R}}

The following theorem gives the achievable rate region for the triple $(R_1,R_s,R_0)$.
\begin{theorem}\label{thm:bcc}\cite{csiszar2},\cite[p.~414]{csiszar}
The capacity region of the BCC  consists of those triples of nonnegative numbers 
$(R_1,R_s,R_0)$ that satisfy, for some RVs $U\to V \to X \to Y,Z$ with $P_{Y|X}=W_1$
and $P_{Z|X}=W_2$, the inequalities
   \begin{align}
R_0 &\leq \min [I(U;Y),I(U;Z)] \label{eq:r0}\\
R_s &\leq I(V;Y|U)-I(V;Z|U) \label{eq:rs}\\
R_0+R_1+R_s &\leq I(V;Y|U)+ \min [I(U;Y),I(U;Z)].\label{eq:r}
   \end{align}
Moreover, it may be assumed that $V=(U,V')$ and the range sizes of $U$ and $V'$ are at most
$|X|+3$ and $|X|+1$.
\label{main}
\end{theorem}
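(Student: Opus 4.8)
The plan is to reduce the BCC problem to the wiretap construction of Section~\ref{sect:wt} by splitting the transmitted block into coordinate regions tailored to the auxiliary variables $U$ and $V$, and superposing two layers of polar coding. First I would fix RVs $U\to V\to X\to Y,Z$ achieving a given boundary point of the region, together with the induced conditionals $P_{U}$, $P_{V|U}$, $P_{X|V}$. The common message of rate $R_0$ is handled by an \emph{outer} layer: polarize $U^N$ through $G_N$, obtaining a vector $S^N=U^NG_N$ with bit-channels seen through $Y^N$ and through $Z^N$. Define $\cH_U$, $\cL_U$, $\cL_{U|Y}$, $\cL_{U|Z}$ as in \eqref{eq:L}; since the common message must be decodable by \emph{both} receivers, the information set for the outer layer is contained in $\cH_U\cap\cL_{U|Y}\cap\cL_{U|Z}$, whose normalized size tends to $\min[I(U;Y),I(U;Z)]$, yielding \eqref{eq:r0}. (Any asymmetry $I(U;Y)\ne I(U;Z)$ is absorbed by chaining the deficit coordinates for the weaker receiver between blocks, exactly as in Section~\ref{sect:wt}.)

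Next, conditioned on (the already-decoded) $U^N$, I would run the wiretap scheme of Section~\ref{sect:wt} on a second \emph{inner} layer to convey $(s_1,s_2)$ to Receiver~1. Here $V$ plays the role that $V$ played before, but every Bhattacharyya parameter is taken \emph{conditioned on} $U^N$: one forms $T^N=V^NG_N$ and the conditional sets $\cH_{V|U}$, $\cL_{V|U\,Y}$, $\cH_{V|U\,Z}$, with normalized cardinalities converging to $H(V|U)$, $1-H(V|U,Y)$, $H(V|U,Z)$ respectively. Partition $\cH_{V|U}$ exactly as in \eqref{eq:p}: the subset that is good for $Y$ and bad for $Z$, of size $\to I(V;Y|U)-I(V;Z|U)$, carries the secret message $s_2$ (after chaining out an $\cE$-block of negligible rate, as before); the subset that is good for $Y$ but \emph{also} good for $Z$ carries the non-secret message $s_1$; this gives $R_s\le I(V;Y|U)-I(V;Z|U)$ and, adding the $s_1$-part and $R_0$, the sum bound \eqref{eq:r}. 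Finally $X^N$ is produced from $V^N$ by the channel $\prod_i P_{X|V}(x_i|v_i)$, mirroring the last step of the wiretap encoder.

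The reliability analysis is a routine concatenation of the Honda--Yamamoto bound \eqref{eq:perr}: Receiver~2 first recovers $U^N$ (error $\le N2^{-N^\beta}$ per block by \eqref{eq:perr} applied to the outer layer with side information $Z^N$), and Receiver~1 recovers $U^N$ and then, using $\hat U^N$ as frozen side information, recovers $T^N$ and hence $(s_1,s_2)$, with total error $O(mN2^{-N^\beta})$ over the cluster. For security I would repeat the argument of Lemmas \ref{security_lemma_wiretap} and \ref{security_lemma_wiretap2} essentially verbatim, with $Z^N$ replaced by $(U^N,Z^N)$ (legitimate since $U^N$ is decodable from $Z^N$, so revealing it to the eavesdropper costs nothing): the secret indices lie in $\cH_{V|U\,Z}$, so their conditional entropy given $(T^{i-1},U^N,Z^N)$ is $\ge 1-2\delta_N$, the distribution-change estimate \eqref{proximity} transfers, and the chaining induction gives $I(S_2^m;Z^m)=O(m^2N^32^{-N^\beta})\to 0$.

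The main obstacle is the bookkeeping of the coordinate partition on the inner layer so that $s_1$, $s_2$, the chaining sets, and the frozen/deterministic sets are simultaneously consistent for \emph{both} receivers — in particular ensuring that the $o(N)$ "neither good nor bad" coordinates (Remark~\ref{rem:oN}), now appearing for each of three conditional polarizations $V|U$, $V|U,Y$, $V|U,Z$ as well as for $U$, $U|Y$, $U|Z$, can all be pushed into chaining or frozen sets without asymptotic rate loss; the degradedness-free treatment of these leftover sets, via the chaining device as in Section~\ref{sect:wt}, is what makes the whole region \eqref{eq:r0}--\eqref{eq:r} (and not merely a corner point) attainable. The cardinality bounds $|U|\le|X|+3$, $|V'|\le|X|+1$ are not part of the coding argument; they follow from the converse/Carath\'eodory reduction in \cite{csiszar2} and I would simply cite them.
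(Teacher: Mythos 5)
Theorem~\ref{thm:bcc} is itself cited from Csisz\'ar--K\"orner and not reproved in the paper; what the paper supplies is polar-coding achievability of the region \eqref{eq:r0}--\eqref{eq:r}, via Section~\ref{csiszar-korner} and Proposition~\ref{prop:sc}. Your proposal matches that achievability argument essentially step for step: an outer polar layer for the common message built on the superposition scheme of \cite{Mondelli14}, an inner layer that replays the Section~\ref{sect:wt} wiretap construction conditioned on $U^N$ using the partition \eqref{partitionBCC}, the Honda--Yamamoto total-variation bound for reliability, and the security Lemmas~\ref{security_lemma_wiretap}--\ref{security_lemma_wiretap2} carried over with $Z^N$ replaced by $(U^N,Z^N)$, exactly as in \eqref{secrecy1}.

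Two small wording imprecisions worth flagging, neither a structural gap: (i) the normalized size of $\cH_U\cap\cL_{U|Y}\cap\cL_{U|Z}$ does \emph{not} in general converge to $\min[I(U;Y),I(U;Z)]$ --- the obstruction is not the asymmetry $I(U;Y)\neq I(U;Z)$ but that neither of $\cI^{(1)}_u,\,\cI^{(2)}_u$ need contain the other, which is exactly why the $\cD^{(1)}/\cD^{(2)}$ chaining is introduced; (ii) that common-message chaining is of the Mondelli type, in which Receiver~2 decodes the $m$ blocks in reverse order, and is thus a genuinely different device from the forward-only seed chaining of Section~\ref{sect:wt}, so ``exactly as in Section~\ref{sect:wt}'' overstates the similarity. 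The cardinality bounds on $|U|,|V'|$ are, as you say, part of the converse in \cite{csiszar2} and correctly left as a citation.
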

One can define the secrecy capacity $C_s$ in terms of the capacity region of the BCC, and then recover
Theorem \ref{thm:dg} as a particular case of Theorem~\ref{main}.

\subsection{Polar Coding for the Csisz{\'a}r-K{\" o}rner Region}
\label{csiszar-korner}
In this section, we aim to show that the capacity region of the BCC can be achieved
using polar codes. 
In the first two steps we design a scheme that achieves the rate pairs $(R_0,R_s)$ in \eqref{eq:r0}-\eqref{eq:rs},
and in the last step we show 
that for any such pair $(R_0,R_s)$ any rate value
\begin{equation}
R_1\leq I(V;Y|U)+ \min [I(U;Y),I(U;Z)]-R_0-R_s\label{R_1}
\end{equation}
is also achievable. Finally, the security condition in \eqref{eq:ck} will be shown in Proposition \ref{prop:sc} below.

The overall encoding scheme is stochastic and assumes some fixed joint distribution of the RVs 
$U,V,X,Y,Z$ such that the constraints of Theorem \ref{thm:bcc} are satisfied. Since the results below are valid for
any such distribution, this will enable us to claim achievability of the rate region in this theorem.
The encoder is formed of two stages performed in succession. At the outcome of
the first stage, which deals with the common message $s_2,$ the encoder 
computes a sequence of $m$ blocks of $N$ bits denoted below by $q^{N}(j), j=1,\dots,m.$
These blocks are used in the second stage to construct the data encoding that is going to be sent to both receivers. 
Namely, it will be seen that the transformed blocks $u^{N}=q^{N}G_N$ can at the same time encode the common
message to both receivers and also encode side information for Receiver 1 to ensure reliable
transmission of the confidential message. The actual sequences to be transmitted are computed in the second
stage based on the sequences $u^N(j).$ This is done by first constructing sequences $t^N(j)$ using the
ideas developed in Sec.~\ref{sect:wt} and by using a stochastic mapping of these sequences on the condewords
$x^N(j).$ Upon transmitting, these codewords are received by Receiver 1 as $y^N(j)$ and by Receiver 2 as
$z^N(j).$ We will argue that the receivers can independently perform decoding procedures that recover the three
desired types of messages reliably (and when appropriate, also securely).

\vspace*{.05in}\subsubsection{The common-message encoding} \label{sect:common}
The proof of
the fact that any $R_0$ satisfying \eqref{eq:r0} is achievable follows from the polar
coding scheme for the superposition region given in \cite{Mondelli14}. 
Given the RVs $U\to V \to X \to Y,Z$ with $P_{Y|X}=W_1$ and $P_{Z|X}=W_2$,
let $U^{N},V^{N}, X^{N}, Y^{N} ,Z^{N}$ be $N$ independent repetitions of the RVs
$U,V,X,Y,Z$. Set 
  \begin{equation}\label{eq:Q}
  Q^{N}=U^{N}G_N,
  \end{equation}
   where $G_N$ is Ar{\i}kan's transform. 
 As before, lowercase letters denote realizations of these RVs.

Define the sets ${\mathscr H}_U$, ${\mathscr L}_{U|Y}$, ${\mathscr L}_{U|Z}$ as follows:
\begin{align*}
{\mathscr H}_U&=\{i\in [N]: Z(Q_i|Q^{i-1}) \geq 1-\delta_N\}  \\
{\mathscr L}_{U|Y}&= \{i\in [N]: Z(Q_i|Q^{i-1},Y^{N}) \leq \delta_N\} \\
{\mathscr L}_{U|Z}&= \{i\in [N]: Z(Q_i|Q^{i-1},Z^{N}) \leq \delta_N\}.
\end{align*}
The cardinalities of these sets, normalized by $N$, approach respectively $H(U), 1-H(U|Y), 1-H(U|Z)$ as $N\to\infty.$
\remove{
Then, set $T^{1:n}=V^{1:n} G_n$. 
By thinking of $U$ as side information on $V$,
we define the sets ${\mathscr H}_{V|U}$, ${\mathscr L}_{V|U}$, ${\mathscr H}_{V|U,Y}$
and ${\mathscr L}_{V|U,Y}$ as
\begin{align*}
{\mathscr H}_{V|U}&=  \{i\in [n]: Z(T^i|T^{1:i-1},U^{1:n}) \geq 1-\delta_n\}, \\
{\mathscr L}_{V|U}&=  \{i\in [n]: Z(T^i|T^{1:i-1},U^{1:n}) \leq \delta_n\}, \\
{\mathscr H}_{V|U,Y}&= \{i\in [n]: Z(T^i|T^{1:i-1},U^{1:n},Y^{1:n}) \geq 1-\delta_n\}, \\
{\mathscr L}_{V|U,Y}&= \{i\in [n]: Z(T^i|T^{1:i-1},U^{1:n},Y^{1:n}) \leq \delta_n\}
\end{align*}
which satisfy
\begin{align*}
\lim_{n\to\infty} \frac{1}{n} |{\mathscr H}_{V|U}|&= H(V|U),\\
\lim_{n\to\infty} \frac{1}{n} |{\mathscr L}_{V|U}|&= 1- H(V|U),\\
\lim_{n\to\infty} \frac{1}{n} |{\mathscr H}_{V|U,Y}|&= H(V|U,Y),\\
\lim_{n\to\infty} \frac{1}{n} |{\mathscr L}_{V|U,Y}|&= 1-H(V|U,Y).
\end{align*}
}

Now observe that for Receiver 1 to recover
$q^{N}$ correctly, the indices of the information bits should be a subset of ${\mathscr I}^{(1)}_u={\mathscr H}_U \cap {\mathscr L}_{U|Y}$. Similarly, for Receiver 2 to recover the sequence $q^{N}$ correctly, the information
bits should be placed only in those positions of $q^{N}$ that are indexed by the set 
${\mathscr I}^{(2)}_u= {\mathscr H}_U \cap {\mathscr L}_{U|Z}$.
Therefore, choosing the indices of information bits as $ {\mathscr I}_u= {\mathscr I}^{(1)}_u\cap {\mathscr I}^{(2)}_u$  ensures that the message embedded 
into $q^{N}$ will be decoded correctly by both receivers. 
In this case, the rate of the common message is 
$R_0=|{\mathscr I}^{(1)}_u\cap {\mathscr I}^{(2)}_u|/N$. Given that
\begin{align*}
\lim_{N\to\infty} \frac{1}{N} |{\mathscr I}^{(1)}_u|&=I(U;Y)\\
\lim_{N\to\infty} \frac{1}{N} |{\mathscr I}^{(2)}_u|&=I(U;Z)
\end{align*}
we conclude the common message rate $R_0$ attains the value $\min[I(U;Y);I(U;Z)]$ only if either ${\mathscr I}^{(2)}_u \subseteq {\mathscr I}^{(1)}_u$
or ${\mathscr I}^{(1)}_u \subseteq {\mathscr I}^{(2)}_u$ holds starting from some $N$.
However, generally this does not have to be the case. To overcome this problem, 
\cite{Mondelli14} proposed the following coding scheme. Define
the sets
\begin{align*}
{\mathscr D}^{(1)}&= {\mathscr I}^{(1)}_u\backslash {\mathscr I}^{(2)}_u\\
{\mathscr D}^{(2)}&= {\mathscr I}^{(2)}_u\backslash {\mathscr I}^{(1)}_u.
\end{align*}

Without loss of generality, assume that $I(U;Y)\leq I(U;Z)$, which implies that $|{\mathscr D}^{(2)}| \geq |{\mathscr D}^{(1)}|$ 
starting with some $N$. To describe the encoding procedure, consider $m$ blocks of $N$ coordinates each. 
In block 1, we use the positions indexed by ${\mathscr D}^{(1)}$ to store message bits and
assign the bits indexed by ${\mathscr D}^{(2)}$ to some fixed values that are available to Receiver 1.
In Block $j, j=2,\dots,m-1,$ we again use the positions indexed by ${\mathscr D}^{(1)}$ 
to store message bits and copy the part of Block $j-1$ indexed by 
the coordinates in ${\mathscr D}^{(1)}$ into the positions indexed by a subset of coordinates
${\mathscr E}^{(2)}\subset {\mathscr D}^{(2)}$ in block $j$ (thus $|{\mathscr E}^{(2)}|=|\cD^{(1)}|$).
Fill the remaining $|{\mathscr D}^{(2)}\backslash \cE^{(2)}|$ bits in each block $j\in \{2,\dots,m-1\}$ with 
random and independent bits and communicate them to Receiver 1. 
These bits can be the same for each block as long as they are independent and uniform within the same block,
so this part of the scheme has negligible impact on the overall rate.
In the final block $m$, we assign the bits indexed by ${\mathscr D}^{(1)}$ to 
some fixed values that are available to Receiver 2 and copy the bits in ${\mathscr D}^{(1)}(m-1)$ 
to the positions in ${\mathscr E}^{(2)}(m).$ 
The remaining $|{\mathscr D}^{(2)}\backslash \cE^{(2)}|$ coordinates in block $m$ are filled with random bit values.
{The bits in ${\mathscr H}_U \cap (\cI_u^{(1)} \cup \cI_u^{(2)})^c$ and ${\mathscr H}_U^c$ are chosen
based on deterministic rules $\lambda_i$ similarly to Sec.~\ref{sect:wt}. These bits are the same for each block and are shared with both
receivers.}
The block diagram of the described coding scheme is shown by Figure \ref{figure4}.
\begin{figure*}[t]
\centering
\includegraphics[scale=0.7]{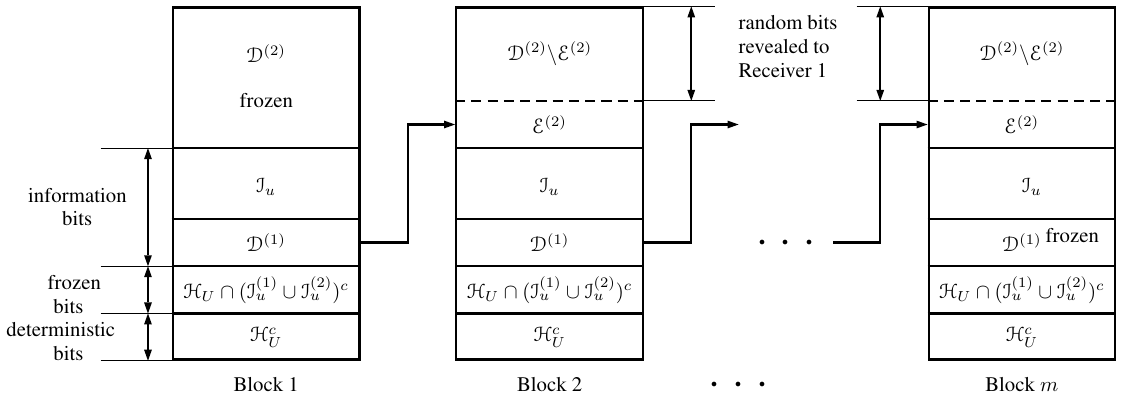}
\caption{Encoding for the common-message case: The structure of the blocks $q^N(j),j=1,\dots,m$}\label{figure4}
\end{figure*}

The encoding stage described passes the sequences $u^N(j),j=1,\dots,m$ to the second stage which is responsible for actual communication.
Upon observing the channel outputs, both receivers perform decoding, which will be described below in 
Sec.~\ref{subsec:decoding}.
\subsubsection{The secret-message encoding}
\label{secondsubsec}

In this section we describe the construction of sequences $x^N$ that are sent by transmitter $X.$ 
The construction relies on the sequences $u^N(j)$ constructed by $X$ in the first stage.
These sequences can be thought of as side information that enables Receiver 1 to reconstruct the
secret message.

The transmission scheme we propose to achieve the rate $R_s$ that satisfies \eqref{eq:rs},
is very similar to the scheme described for the wiretap channel problem in
Sec. \ref{sect:wt}. Our solution consists of choosing the indices of information
bits and random bits appropriately and using a chaining scheme quite similar to the one shown in Figure \ref{figure3}.
 
Let  $U^{N},V^{N}, X^{N}, Y^{N} ,Z^{N}$ be as defined in Sec. \ref{sect:common},
and let $T^{N}=V^{N} G_N$. Viewing $U$ as side information about $V$,
we define the sets
\begin{align*}
{\mathscr H}_{V|U}&=  \{i\in [N]: Z(T_i|T^{i-1},U^{N}) \geq 1-\delta_N\} \\
{\mathscr L}_{V|U,Y}&= \{i\in [N]: Z(T_i|T^{i-1},U^{N},Y^{N}) \leq \delta_N\}  \\
{\mathscr H}_{V|U,Z}&= \{i\in [N]: Z(T_i|T^{i-1},U^{N},Z^{N}) \geq 1-\delta_N\} 
\end{align*}
whose cardinalities, normalized by $N,$ approach respectively the values $H(V|U),1-H(V|U,Y),H(V|U,Z)$ as $N\to\infty.$


\vspace*{.1in} The intuition behind the construction presented below can be described as follows.
First, note that the coordinates of $T^{N}$ indexed by 
  \begin{equation}\label{eq:J1}
   {\mathscr J}^{(1)}={\mathscr H}_{V|U}\cap {\mathscr L}_{V|U,Y}
   \end{equation}
     can be decoded by Receiver 1, and so they can be used to send more data in addition to the
common message. Of these bits, the part indexed
by $
({\mathscr H}_{V|U}\cap {\mathscr L}_{V|U,Y})\backslash ({\mathscr H}_{V|U}\cap {\mathscr H}^c_{V|U,Z}) $
can be used to transmit the confidential message. Then, given that 
$
\frac{1}{N} |{\mathscr H}_{V|U}\cap {\mathscr L}_{V|U,Y}|\to I(V;Y|U)$
and $\frac{1}{N} |{\mathscr H}_{V|U}\cap {\mathscr H}^c_{V|U,Z}|\to I(V;Z|U),
$
we obtain
\begin{gather*}
\lim_{N\to\infty}\frac{1}{N} |({\mathscr H}_{V|U}\cap {\mathscr L}_{V|U,Y})\backslash ({\mathscr H}_{V|U}\cap {\mathscr H}^c_{V|U,Z})| 
\\ \geq I(V;Y|U)-I(V;Z|U).
\end{gather*}
This implies that the proposed scheme transmits the secret message at rates arbitrarily close to the rate
given by \eqref{eq:rs}
 (provided that it also satisfies the strong security condition).


\vspace*{.1in}Building on this observation, we proceed to describe the coding scheme, adding some details
that make the secrecy part work. Define the sets\footnote{We again use the same notation as in \eqref{eq:p}; since the earlier notation
is used only in Sec. \ref{sect:wt}, this should not cause confusion.}
   \begin{equation}
   \begin{aligned}
         \label{partitionBCC}
{{\mathscr I}}&= {\mathscr H}_{V|U}\cap {\mathscr L}_{V|U,Y}\cap {\mathscr H}_{V|U,Z} \\
{{\mathscr B}}&= {\mathscr H}_{V|U}\cap {\mathscr L}^c_{V|U,Y} \cap {\mathscr H}_{V|U,Z} \\
{{\mathscr R}}_1&={\mathscr H}_{V|U}\cap {\mathscr L}_{V|U,Y}\cap {\mathscr H}^c_{V|U,Z} \\
{{\mathscr R}}_2&= {\mathscr H}_{V|U}\cap {\mathscr L}^c_{V|U,Y}\cap {\mathscr H}^c_{V|U,Z} \\
{{\mathscr D}}&= {\mathscr H}^c_{V|U}.
   \end{aligned}
   \end{equation}
Note that the sets ${{\mathscr I}},{{\mathscr R}}_1,{{\mathscr R}}_2,{{\mathscr B}},{{\mathscr D}}$ partition $[N]$.
This partition is basically the same as in \eqref{eq:p} (see also Figures 3 and \ref{fig:partition}) except for the fact that the high- and low-entropy subsets rely on entropy quantities that are additionally conditioned on $U$.
\remove{\begin{figure}[h]
\begin{picture}(200,200)(80,0)
\includegraphics[scale=1.1]{partition.eps}
\put(-300,3){${{\mathscr R}}_1$}
\put(-220,-10){${{\mathscr I}}$}
\put(-200,23){${{\mathscr E}}$}
\put(-160,3){${{\mathscr B}}$}
\put(-115,3){${{\mathscr R}}_2$}
\put(-50,3){${{\mathscr D}}$}
\put(-380,40){$V\to U,Y$}
\put(-380,105){$V\to U,Z$}
\put(-370,178){$V\to U$}
\put(-170,155){$N$}
\put(-220,192){$\cH_{V|U}$}
\put(-200,124){$\cH_{V|U,Z}$}
\put(-270,58){$\cL_{V|U,Y}$}
\end{picture}
\vspace*{.1in}
\begin{caption} {Partition of coordinates for transmission over the broadcast channel.
The top block of $N$ coordinates refers to high-entropy coordinates for
the distribution $P_{V|U}$. Similarly, the middle block of $N$ coordinates corresponds to high-entropy 
coordinates for the distribution $P_{V|U,Z}$. The low-entropy coordinates for the distribution $P_{V|U,Y}$
is represented by the bottom block of $N$ coordinates. The partition of $[N]$ is done based on these three blocks.}\label{partition2}
\end{caption}
\end{figure}
}

The transmission scheme that we propose is formed of multiple blocks joined in clusters of $m$ blocks.
Similarly to the wiretap coding scheme, 
there is a seed block shared between the transmitter and Receiver 1. 
The seed block consists of $|{{\mathscr R}}_2|$ random bits.
Even if the set ${{\mathscr R}}_2$ constitutes a nonvanishing proportion of $[N]$, 
the rate of the seed $|{{\mathscr R}}_2|/ mN$ can be
made arbitrarily small by choosing $m$ sufficiently large. (For example, one can
set $m=N^{\alpha}$ for some $\alpha>0$, and let $N$ tend to infinity.)

\vspace*{.1in}
The encoding procedure is as follows.
Our aim is to construct $m$ blocks $t^N(j)$ which will be used to form the transmitted sequences 
$x^N(j), j=1,\dots,m.$
Apart from the seed block, all the other blocks $t^N$ contain a group of almost deterministic bits, 
denoted by ${{\mathscr D}}$ in \eqref{partitionBCC}. For block $j, j=1,\dots,m$, the values of these bits are assigned according to 
a deterministic rule $\lambda_i$ (similarly to the earlier appearence of this mapping, see, e.g., Sec.~\ref{sect:wt}), namely
    $$
      t_i=\lambda_i  (t^{i-1}, u^N(j))
    $$
 for all $i\in{{\mathscr D}}$.
The set of bits ${{\mathscr B}}$ in each of the $m$ blocks is chosen similarly based on
 a family of deterministic rules $\{\lambda_i, i\in {{\mathscr B}}\}.$
 {These rules are chosen the same for each block, and are shared with Receiver 1 (cf. also the remark on this in Sec.~\ref{sect:wt}).} 
\remove{Even if the rate $(|\cB|+|\cD|)/mN$ can be made arbitrarily small similarly to $|\cR_2|/mN$, it may be
the case that such a rule sharing requires a positive rate of secure communication. As a remedy to this
problem, the bits $t_i,i\in\cB$ can be chosen randomly and the same for each block, and shared with
Receiver 1 secretly. It is still possible to maintain secrecy by using this solution, as the secrecy proof given here
and in \cite{chou14} indicates} 

The remaining subsets of coordinates in are filled as follows. For block 1, the bits in the set ${{\mathscr R}}_2$ are assigned 
the values of the bits of the seed block, while for block $j$, $j=2,\dots,m$ these bits are set to be equal to the bits 
in ${{\mathscr E}}$ of block $j-1.$ 
Here, $\cE$ is a subset of $\cI$ having the same size as $\cR_2$.
The messages are stored in the bits indexed by ${{\mathscr I}}\backslash{{\mathscr E}}$ which are good for Receiver 1 and
 contained in the bad
(high-entropy) set of Receiver 2.
The indices in the subset ${\mathscr E}$ are still good for Receiver 1 but bad for Receiver 2. 
Nevertheless, they are filled with random bits that are used for decoding by Receiver 1 in the same way as was done in Sec.~\ref{sect:wt}.
Finally, the bits in ${{\mathscr R}}_1$ are assigned randomly and uniformly for each of the $m$ blocks. We once again refer to
Fig.~\ref{fig:partition} which illustrates the described processing.
Once the blocks $t^N(j), j=1,\dots,m$ are formed, we compute $m$ sequences $v^N(j) =t^N(j)G_N$ by using the polarizing transform. 

Finally, the codewords to be sent by the transmitter are computed as follows. The codeword $x^N(j), j=1,\dots,m$ is sampled from
$\cX^N$ according to the distribution 
 $P_{X^{N}|V^{N}}(x^{N}|v^{N})=\prod_{i=1}^N P_{X|V}(x_i|v_i)$, where
$P_{X|V}$ is the conditional distribution induced by the joint distribution of the RVs $V$ and $X$.

\subsubsection{Decoding of the common message and the secret message}\label{subsec:decoding} 
 Assume that the transmitted sequence $x^N$ is received as $y^N$ by Receiver 1 and as $z^N$ by Receiver 2.
Importantly, by our construction these sequences follow the conditional distributions $P_{Y|X}$ and $P_{Z|X}$ given by the channels $W_1$ and $W_2.$ 
We describe the decoding procedures by Receivers 1 and 2. Initially they perform similar operations 
aimed at recovering the common message.
Once this is accomplished, Receiver 1 performs additional decoding to recover the secret message. 

We begin with the common-message part. In accordance with \eqref{eq:Q}, Receivers 1 and 2 decode
the blocks $q^N(j), j=1,\dots,m$ relying on a iterative procedure.
As the construction suggests, Receiver 1 decodes in the forward direction, starting with block $1$ and 
ending with block $m$, and Receiver 2 decodes backwards, starting with block $m$ and ending with block $1$. 
Let 
   $$
   \begin{aligned}
      D(j)&=\{q_i, i\in {\mathscr D}^{(1)}(j) \}, j=1,\dots,m-1\\
      C(j)&=\{q_i, i\in {\mathscr E}^{(2)}(j) \}, j=2,\dots,m
      \end{aligned}
   $$
 denote the subblocks ${\mathscr D},{\mathscr E}$ of the corresponding blocks (see Fig.~\ref{figure4}).

The processing by Receiver 1 is as follows.
For block $1$, it computes 
\begin{equation}\label{eq:r1-1}
\hat{q}_i= \begin{cases}
q_i \hspace*{.85in}\text{if} \,\, i\in {\mathscr D}^{(2)}\\
\argmax_{q\in \{0,1\}} P_{Q_i|Q^{i-1}Y^N}(q|\hat{q}^{i-1}, y^N), \\ 
\hspace*{1in}\text{if}\,\, i\in {\mathscr I}_u\cup {\mathscr D}^{(1)} \\
 \lambda_i (\hat{q}^{i-1}) \hspace*{.5in}\text{otherwise}.
\end{cases}
\end{equation}
For the remaining blocks $j=2,\dots,m$ Receiver 1 computes the vector $(\hat q_i(j), i=1,\dots,N)$ as follows:
   \begin{equation}\label{eq:r1-2}
\hat{q}_i(j)= \begin{cases}
D_i(j-1) \hspace*{.4in}\text{if} \,\, i\in {\mathscr E}^{(2)}\\
q_i(j) \hspace*{.7in} \text{if} \,\, i\in {\mathscr D}^{(2)}\backslash {\mathscr E}^{(2)}  \\
\argmax_{q\in \{0,1\}} P_{Q_i|Q^{i-1}Y^N}(q|\hat{q}^{i-1}(j), y^N(j)),\\ \hspace*{1.0in} \text{if}\,\, i\in {\mathscr I}_u\cup {\mathscr D}^{(1)} \\
 \lambda_i (\hat{q}^{i-1}(j)) \hspace*{.35in} \text{otherwise}.
\end{cases}
   \end{equation}
The processing by Receiver 2 is quite similar except that it starts with block $m$ and advances ``backwards''
for $j=m-1,m-2,\dots,1.$ For block $m$ the rule is as follows:
   \begin{equation}\label{eq:r2-1}
\hat{q}_i= \begin{cases}
q_i \hspace*{.95in} \text{if} \,\, i\in {\mathscr D}^{(1)}\\
\argmax_{q\in \{0,1\}} P_{Q_i|Q^{i-1}Z^N}(q|\hat{q}^{i-1}, z^N), \\ \hspace*{1.1in} \text{if}\,\, i\in {\mathscr I}_u\cup {\mathscr D}^{(2)} \\
 \lambda_i (\hat{q}^{i-1}) \hspace*{.6in} \text{otherwise}.
\end{cases}
\end{equation}
For blocks $j=m-1,m-2,\dots,1$, Receiver 2 computes its estimates of the vector $q^N(j)$ as follows:
\begin{equation}\label{eq:r2-2}
\hat{q}_i(j)= \begin{cases}
C_i(j+1) \hspace*{.5in} \text{if} \,\, i\in {\mathscr D}^{(1)}\\
\argmax_{q\in \{0,1\}} P_{Q_i|Q^{i-1}Z^N}(q|\hat{q}^{i-1}(j), z^N(j)), \\ \hspace*{1.1in}\text{if}\,\, i\in {\mathscr I}_u\cup {\mathscr D}^{(2)} \\
 \lambda_i (\hat{q}^{i-1}(j))  \hspace*{.4in}\text{otherwise}.
\end{cases}
\end{equation}
(in \eqref{eq:r1-1}-\eqref{eq:r2-2} the notation is somewhat abbreviated to keep the formulas compact, e.g., 
no reference is made to the index of the receiver, and the block index $j$ is sometimes omitted).

\vspace*{.1in} The processing described above in \eqref{eq:r1-1}-\eqref{eq:r1-2}
yields the sequences $\hat u^N(j)=q^N(j)G_N, j=1,\dots,m$ which are used by Receiver 1 to recover the
secret messages.
Denote by ${\cE}(0)$ the message sequence encoded in
the seed block, and let ${\cE}(j), j=1,\dots,m$ be the subblock of block $j$ indexed by the set $\mathscr E.$
%
For $j=1,\dots,m$, the decoding rule is as follows:
     \begin{align}\label{eq:dec1}
    &\hat{t}_i(j)=\nonumber\\ 
    &\begin{cases}
    \lambda_i (\hat{t}^{i-1}(j), \hat{u}^N(j)) \hspace*{.4in} \text{if}\,\, i\in {{\mathscr D}}\cup {{\mathscr B}} \\
\argmax_{t\in \{0,1\}} P_{T_i|T^{i-1} U^{N}Y^{N}}(t|\hat{t}^{i-1}(j), \hat{u}^N(j), y^{N}(j)), \\ \hspace*{1.47in}\text{if}\,\, i\in 
{{\mathscr I}}\cup{{\mathscr R}}_1\\
{\cE}_i(j-1) \hspace*{0.92in} \text{if} \,\, i\in {{\mathscr R}}_2.
                  \end{cases}
      \end{align}
where  $P_{T_i|T^{i-1}U^NY^{N}}$ is the conditional distribution induced by the
joint distribution of the RVs  $U^N$, $V^{N}$ and $Y^{N}$ (the notation is again abbreviated similarly to \eqref{eq:dec}). 


\subsubsection{Achievability of the rate region \eqref{eq:r0}-\eqref{eq:r}}\label{sect:achievability}

The rate of the common message achieved by the construction in Sec.~\ref{sect:common} is equal to
\begin{equation*}
R_0=\frac{1}{mN}\left[m |{\mathscr I}_u|+ (m-1) |{\mathscr D}^{(1)}|\right].
\end{equation*}
As $N$ increases, we obtain
\begin{equation*}
\frac{m-1}{m} I(U;Y)\leq R_0 \leq I(U;Y).
\end{equation*}
For sufficiently large $m$ this quantity is arbitrarily close to the
common-message rate value given in \eqref{eq:r0}. Note that
we have assumed that $I(U;Z)\geq I(U;Y);$ to handle the opposite case is suffices
to interchange the roles of the pieces ${\mathscr D}^{(1)}$ and ${\mathscr D}^{(2)}$ in the common-message
encoding and decoding procedures.

As shown in \cite{Mondelli14}, both receivers can decode the common message correctly with probability of
error at most $mN2^{-N^{\beta}}, \beta\in (0,1/2)$. This follows because in this stage, Receivers 1 and 2
aim only at decoding the bits corresponding to the index sets 
${\mathscr D}^{(1)}\cup {\mathscr I}_u= {\cH}_U\cap {\cL}_{U|Y}$ and
${\mathscr D}^{(2)}\cup {\mathscr I}_u= {\cH}_U\cap {\cL}_{U|Z}$, respectively. That these bits can be recovered
in a sucessive decoding procedure follows from the basic results on polar codes \cite{arikan2009,ari09a} and \cite{Honda13}.

\vspace*{.1in} For the secret-message part of the communication the properties of the scheme are characterized by the following proposition.
\begin{proposition}\label{prop:sc}
For any $\gamma>0$, $\epsilon>0$ and $N\to\infty$ it is possible to choose $m$ so that the transmission scheme described 
above attains a secrecy rate $R_s$ such that $R_s\geq I(V;Y|U)-I(V;Z|U)-\gamma$ 
and the information leaked to Receiver 2 satisfies the strong secrecy condition in Definition \ref{BCC}.
\end{proposition}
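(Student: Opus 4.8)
The plan is to reproduce, essentially verbatim, the three-part structure of the proof of Theorem~\ref{prop:wiretap} --- rate, reliability, strong secrecy --- with two modifications: every high/low-entropy set is now the $U$-conditioned version appearing in \eqref{partitionBCC}, and the sequences $u^N(1),\dots,u^N(m)$ produced by the common-message stage are carried along throughout as side information. Receiver~1 uses them in the decoder \eqref{eq:dec1}; in the security analysis I would adjoin the whole $U^m=(U^N(1),\dots,U^N(m))$ to the eavesdropper's observations and bound the larger quantity $I(S_2^m;Z^m,U^m)$, which dominates $I(S_2^m;Z^m)$. This costs nothing asymptotically because the common-message ($u$) layer is built independently of the secret messages, so $U^m$ is independent of $S_2^m$ and $I(S_2^m;Z^m,U^m)=I(S_2^m;Z^m\mid U^m)$.

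For the rate: removing the $|\cE|=|\cR_2|$ coordinates of $\cE\subset\cI$ leaves $m(|\cI|-|\cR_2|)$ message bits in $mN$ channel uses, and since $|\cI|-|\cR_2|=|\mathscr H_{V|U}\cap\mathscr L_{V|U,Y}|-|\mathscr H_{V|U}\cap\mathscr H^c_{V|U,Z}|$ (using that $\mathscr H_{V|U,Z},\mathscr H^c_{V|U,Z}$ and $\mathscr L_{V|U,Y},\mathscr L^c_{V|U,Y}$ each partition $[N]$), the rate tends to $I(V;Y|U)-I(V;Z|U)$; taking $m=N^\alpha$ makes the seed overhead $|\cR_2|/mN$ and the frozen overhead $(|\cB|+|\cD|)/mN$ vanish, so $R_s\ge I(V;Y|U)-I(V;Z|U)-\gamma$ for $N$ large. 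For reliability I would first invoke \cite{Mondelli14} (as recalled in Sect.~\ref{sect:achievability}) to get $\hat u^N(j)=u^N(j)$ for every $j$ with probability $1-O(mN2^{-N^\beta})$; conditioned on that event, \eqref{eq:dec1} is the Honda--Yamamoto successive-cancellation decoder of \cite{Honda13} for the channel $V\to(U,Y)$, so replacing the deterministic assignments on $\cB\cup\cD$ by sampling from $P_{T_i|T^{i-1},U^N}$ yields a law within total variation $O(N2^{-N^\beta})$ of $\prod_i P_{UVXYZ}$ on each block (the proof of Lemma~1 of \cite{Honda13} applies with $U^N$ adjoined to the conditioning), and since all message, chaining and $\cR_1$ coordinates lie in $\mathscr L_{V|U,Y}$ the per-block error is $O(N2^{-N^\beta})$; a union bound over the two stages and the $m$ blocks, together with the usual derandomization of the $\lambda_i$'s, completes this part.

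The security part is the core, and I would transcribe Lemmas~\ref{security_lemma_wiretap} and \ref{security_lemma_wiretap2} and the telescoping chain \eqref{wiretap_secure_4}--\eqref{wiretap_secure_7} with $Z^\bullet$ replaced by $(Z^\bullet,U^\bullet)$. The conditional analogue of Lemma~\ref{security_lemma_wiretap} states that for any $\cA\subset\cI$,
\begin{equation*}
I(T[\cI\setminus\cA];T[\cA],Z^N,U^N)=O(N^3 2^{-N^\beta});
\end{equation*}
its proof is unchanged: by \eqref{partitionBCC} one has $\cI\subseteq\mathscr H_{V|U,Z}$, hence $Z(T_{a_i}|T^{a_i-1},U^N,Z^N)>1-\delta_N$, so $Z(X|Y)^2\le H(X|Y)$ gives $H_P(T_{a_i}|T^{a_i-1},U^N,Z^N)\ge 1-2\delta_N$; the continuity estimate \cite[Theorem~17.3.3]{Cover06} together with the proximity bound promotes this to $H_Q(T_{a_i}|T^{a_i-1},U^N,Z^N)\ge 1-O(N^2 2^{-N^\beta})$; and the chain rule over the relabelled coordinates of $\cI$ finishes it. The analogue of Lemma~\ref{security_lemma_wiretap2}, $I(E_j;S_2^j,Z^j,U^j)=O(jN^3 2^{-N^\beta})$, follows by the same induction, the point being that the chaining joins block $k-1$ to block $k$ only through $\mathscr R_2(k)=E_{k-1}$, so that conditionally on $(S_{2,k},Z_k,U_k)$ one has the Markov chain $S_2^{k-1}Z^{k-1}U^{k-1}\to E_{k-1}\to E_k$ (this uses that $U^N(k)$ carries no information about the secret-message subsystem beyond what is already in $U^{k-1}$). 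Feeding both estimates into the telescoping chain yields $I(S_2^m;Z^m,U^m)=O(m^2 N^3 2^{-N^\beta})\to 0$, and a fortiori $I(S_2^m;Z^m)\to 0$, which is the strong-secrecy condition of Definition~\ref{BCC}.

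The step I expect to be the main obstacle is the proximity bound underlying both the reliability and security arguments. Unlike the single-layer wiretap construction, here there are two nested chaining layers and the realised $U^N$ is not $P_{U^N}$-distributed, so one must check that replacing the deterministic rules $\lambda_i$ in \emph{both} the $q$-layer and the $t$-layer by the corresponding stochastic assignments still produces a joint law within total variation $O(N2^{-N^\beta})$ of $\prod_i P_{UVXYZ}$ per block, and that the extra dependence created by conditioning the $t$-layer on the realised $u^N(j)$ does not degrade the conditional-entropy estimates --- which is exactly why one adjoins $U^N$ to every conditioning set rather than trying to argue about $Z^N$ in isolation.
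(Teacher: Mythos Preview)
Your proposal is correct and follows essentially the same approach as the paper: the paper's proof of Proposition~\ref{prop:sc} is indeed the $U$-conditioned transcription of the proof of Theorem~\ref{prop:wiretap}, with the security target upgraded to $I(M^m;U^m,Z^m)\to 0$ and Lemmas~\ref{security_lemma_wiretap}, \ref{security_lemma_wiretap2} replaced by their $(U^N,Z^N)$-conditioned analogues (stated as Lemmas~\ref{security_lemma_BCC}, \ref{security_lemma_BCC2} and not reproved). The ``main obstacle'' you flag is exactly the point the paper addresses in its reliability paragraph, via the two-step proximity estimate $\|P_{U^N}-\tilde P_{U^N}\|\le N2^{-N^\beta}$ and $\|P_{V^N|U^N}(\cdot|u^N)-\tilde P_{V^N|U^N}(\cdot|u^N)\|\le N2^{-N^\beta}$, which combine to give the joint bound you need.
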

\begin{proof}
Assume that $X,V,Y,Z$ are as given by Theorem \ref{main}. The rate of proposed coding scheme is
\begin{equation*}
\frac{m (|{{\mathscr I}}|-|{{\mathscr E}}|)}{m N}= \frac{ (|{\mathscr H}_{V|U}\cap {\mathscr L}_{V|U,Y}|-|{\mathscr H}_{V|U}\cap {\mathscr H}^c_{V|U,Z}|)}{ N}
\end{equation*}
which converges to $I(V;Y|U)-I(V;Z|U)$ as $N$ goes to infinity. 
Recalling Theorem \ref{main}, we note that this is the target value of the rate $R_s$ for given $U,V$ and $X$ satisfying $U\to V\to X \to Y,Z$ and $P_{Y|X}=W_1$, $P_{Z|X}=W_2$.

Introduce the following RVs: Let $M^{m}=(M_1,M_2,\dots,M_m)$ be a sequence of message bits
 $\{t_i, i\in {{\mathscr I}} \backslash {{\mathscr E}}\}$ sent in blocks $1,\dots,m,$ and let 
$Z^m= (Z^N(1),\dots,Z^N(m))$ be the RVs that represent the random observations of Receiver 2 upon
transmitting $m$ blocks $X^N(j).$  Further, let $E_j$ correspond to the bits contained in the subset $\cE (j)$
for all $j=1,\dots,m.$

{\em Reliability:} The proof of reliable decoding by Receiver 1 follows from the results of
\cite{Honda13} and is very similar to Sec. \ref{sect:wt}. Let $P_{U^NV^NX^NY^NZ^N}(u^N,v^N,x^N,y^N,z^N)\triangleq
\prod_{i=1}^N P_{U,V,X,Y,Z} (u_i,v_i,x_i,y_i,z_i),$
where $P_{UVXYZ}$ is the joint distribution of the RVs $U,V,X,Y,Z$ appearing in
Theorem \ref{thm:bcc}. Let $\tilde P_{U^N,V^N,X^N,Y^N,Z^N}$ be the empirical
distribution induced by the transmission scheme if the deterministic rules $\lambda_i$
in both stages of the encoding are replaced by random assignments such that, for $a=0,1,$
  \begin{align*}
 \Pr(q_i=a)=P_{Q_i|Q^{i-1}}(a|q^{i-1})\\
  \Pr(t_i=a)=P_{T_i|T^{i-1}}(a|t^{i-1})
\end{align*}
holds for the first and second stages, respectively. Here $Q^N$ is the RV defined in \eqref{eq:Q}.  Then, from the proof of Lemma 1 in \cite{Honda13},
it follows that
\begin{align*}
\|P_{U^N}-\tilde P_{U^N}\| &\leq N 2^{-N^{\beta}} \\
\|P_{V^N|U^N}(.|u^N)-\tilde P_{V^N|U^N}(.|u^N)\| &\leq N 2^{-N^{\beta}},
\end{align*}
the second estimate for every $u^N.$
Hence, we conclude that
  \begin{align*}
\|P_{U^NV^N}-\tilde P_{U^NV^N}\| \leq 2N 2^{-N^{\beta}} \\
\|P_{U^NV^NX^NY^NZ^N}-\tilde P_{U^NV^NX^NY^NZ^N}\| \leq 2N 2^{-N^{\beta}} 
 \end{align*}
holds for all the $m$ blocks of transmission. Moreover, since the message bits are entirely 
contained in the set of ${\mathscr L}_{V|U,Y}$, the successive decoding procedure \eqref{eq:dec1}
has the probability of error bounded by
\begin{align*}
&|P_{U^NV^NX^NY^N}-\tilde P_{U^NV^NX^NY^N}\| \\ &\quad+ \sum_{i\in\cI} Z(T_i|T^{i-1},U^N,Y^N) \leq 3N 2^{-N^{\beta}}
\end{align*}
for each individual block. Thus, we observe that there exists a family of deterministic rules $\lambda_i$ for each
stage such that the overall error probability is at most 
$3mN 2^{-N^{\beta}}, \beta\in (0,1/2).$
We conclude that the probability that Receiver 1 decodes the information bits correctly approaches $1$ as $N$ goes to infinity. 

\vspace*{.1in}

{\em Security:} We will show that condition \eqref{security} is fulfilled for the sequence of $m$ blocks of transmission.
Note that Receiver 2 observes not only a realization of $Z^m$, but also estimates the RVs $U^m=(U^N(1),\dots,U^N(m))$ through procedure
\eqref{eq:r2-1}-\eqref{eq:r2-2}. For this reason the strong security condition to be proved takes the form
\begin{equation}
\lim_{N\to\infty} I(M^{m};\, U^{m}, Z^m)=0. \label{secrecy1}
\end{equation}
The proof of \eqref{secrecy1} is very similar to the proof of strong secrecy in Sec. \ref{sect:wt}.
The counterparts of Lemmas \ref{security_lemma_wiretap} and \ref{security_lemma_wiretap2}
for the BCC are provided below. The proofs are the same as the ones in Sec. \ref{sect:wt} and 
will be omitted.

\begin{lemma}
Let $T[{\mathscr A}]=\{t_i, i\in{\mathscr A}\}$ and $T[\cI\backslash {\mathscr A}]=\{t_i, i\in\cI\backslash{\mathscr A}\},$ 
where ${\mathscr A}$ is any subset of $\cI$.
Then
\begin{equation}
I(T[\cI\backslash{\mathscr A}]; T[{\mathscr A}], U^N, Z^N)=O(N^3 2^{-N^{\beta}}) \label{lemma_ineq_BCC}
\end{equation}
\label{security_lemma_BCC}
\end{lemma}
\begin{lemma}
Let   $M^j\triangleq(M_1,M_2,\dots,M_j)$, $Z^j\triangleq(Z_1,Z_2,\dots,Z_j),$
and $U^j\triangleq(U_1,U_2,\dots,U_j).$ Then,
for all $j=1,\dots, m$, we have
\begin{equation}
I(E_j; M^j, U^j, Z^j)=O(jN^3 2^{-N^{\beta}})
\end{equation}
\label{security_lemma_BCC2}
\end{lemma}
The rest of the proof also follows similarly to the proof in Sec. \ref{sect:wt}.
Repeating the inequalities which led to \eqref{wiretap_secure_7}, we get
\begin{align*}
&I(M^m;\, U^m, Z^m)  \leq I(M^m;\, U^m ,Z^m ,E_m) \\
& \leq I(M^{m-1}; U^{m-1}, Z^{m-1}, E_{m-1})+O(mN^3 2^{-N^{\beta}})
\end{align*}
which implies
\begin{align*}
I(M^m;\, U^m, Z^m) &\leq I(M^m;\, U^m, Z^m, E_m)\\&= O(m^2 N^3 2^{-N^{\beta}}).
\end{align*}
This completes the proof of \eqref{secrecy1}.
\end{proof}

Let us show that the ``additional-message'' rate $R_1$ as given by \eqref{eq:r} is also achievable. We have seen that
any rate pair $(R_0,R_s)$ satisfying \eqref{eq:r0}-\eqref{eq:rs} is achievable in the system. 
Moreover, observe that Receiver 1 decodes correctly messages at the rate of $\min [I(U;Y),I(U;Z)]$ according to
\eqref{eq:r1-1}-\eqref{eq:r1-2}, and additionally decodes messages at the rate of $I(V;Y|U)$ owing to the part of the encoding $\{t_i, i\in \cJ^{(1)}\}$ given by \eqref{eq:J1}. Since these two groups of information bits can be decoded simultaneously by Receiver 1, we conclude that it is possible 
to communicate to Receiver 1 an additional message at rate $R_1$.

Finally, we remind the reader that generalization of the results presented in
Sec. \ref{sect:wt} and Sec. \ref{sect:bcc} to nonbinary alphabets is possible using a multitude of methods
available in the literature. For example, E.~{\c S}a{\c s}o{\u g}lu \cite[Ch.4]{sasoglu12} 
showed that polarization phenomenon extends to all finite alphabets, proving also that the rate of
polarization is $2^{-N^{\beta}}$ for $\beta$ arbitrarily close to 1/2.
{For the set of subchannels having capacity close to 0, a similar result on the rate
of convergence holds \cite{karzand}.}

\section{Conclusion}
\label{conclusion}
In this paper, we have considered the wiretap channel problem \cite{wyner},\cite{csiszar2}. We proved
 that the secrecy capacity $C_s$ is
achievable by polar codes under the strong security constraint, and without
additional assumptions on the channel such as symmetry or degreadedness.
We also showed that it is possible to build on this solution by adding a second layer of encoding
which enables one to attain the capacity region of the BCC introduced by Csisz{\'a}r and K{\" o}rner.
 
Our construction and the 
construction of \cite{chou14} share some common features. In particular, the two-layer chaining construction for the 
BCC in our paper is equivalent to the chaining construction appearing in \cite{chou14}. In addition to that, \cite{chou14} 
includes
a third layer of encoding to handle the channel prefixing issue, i.e.,
the transition between the auxiliary RV $V$ and the real input $X$. Their construction also requires
the use of pre-shared common randomness (secret seed) which has vanishing rate compared to the overall
amount of communication. At the same time, \cite{chou14} bypasses the existence issue of good deterministic maps
of the type discussed in Sec.~\ref{sec:prelim}B by allowing additional secure communication between the users $X$ and $Y.$ 


\ifCLASSOPTIONcaptionsoff
  \newpage
\fi

\begin{IEEEbiographynophoto}{Talha Cihad Gulcu}
(S'13) received the B.S. degree from Middle East Technical University, Ankara, in 2009, the M.S. degree from Bilkent University, Ankara, in 2011, and the Ph.D. degree from University of Maryland, College Park, MD, in 2015, all in electrical engineering. His research interests are information theory and signal processing for communications.
\end{IEEEbiographynophoto}

\begin{IEEEbiographynophoto}{Alexander Barg}
(M'00-SM'01-F'08) received the M.Sc. degree in applied
mathematics and the Ph.D. degree in electrical engineering, the latter from
the Institute for Information Transmission Problems (IPPI) Moscow, Russia,
in 1987. He has been a Senior Researcher at the IPPI since 1988.
 Since 2003 he has been a Professor
in the Department of Electrical and Computer Engineering and Institute for
Systems Research, University of Maryland, College Park.

Alexander Barg was a co-recipient of the IEEE Information Theory Society
Paper Award in 2015. During 1997-2000, A. Barg was an Associate Editor
for Coding Theory of the IEEE Transactions on Information Theory.
He was the Technical Program Co-Chair of the 2006 IEEE International
Symposium on Information Theory and of 2010 and 2015 IEEE ITWs.
He serves on the Editorial Board of several journals including {\em Problems
of Information Transmission}, {\em SIAM Journal on Discrete Mathematics}, and
{\em Advances in Mathematics of Communications}.

Alexander Barg's research interests are in coding and information theory,
signal processing, and algebraic combinatorics.
\end{IEEEbiographynophoto}
\vfill

\end{document}